\documentclass{llncs}

\usepackage{graphicx,fancyvrb}

\usepackage{latexsym}
\usepackage{amssymb}
\usepackage{amsmath}
\usepackage{amsfonts}
\usepackage{proofs}
\usepackage{url}

\usepackage{color}
\usepackage{algorithmic}
\usepackage{algorithm}

\newcommand{\trust}[0]{\triangleleft}

\newcommand{\Vars}{\mathit{Vars}}
\newcommand{\Lit}{\mathit{Lit}}

\newcommand{\Agt}{\mathit{Ag}}
\newcommand{\Temp}{\mathit{T}}

\newcommand{\Hist}{\mathcal{EL}}
\newcommand{\Theory}{\mathcal{E}}
\newcommand{\Interp}{\mathfrak{I}}
\newcommand{\hfunc}{\mathcal{F}}
\newcommand{\po}[0]{\mathcal{PO}}

\newcommand{\trustTrans}{\textsc{Trans}\trust}
\newcommand{\RtrustTrans}{\textsc{Trans}\prec}

\newcommand{\ddoc}[3]{#1: (#2: #3)}
\newcommand{\idoc}[2]{#1\ [#2]}

\newcommand{\wffHLzero}{\rho}
\newcommand{\wffHLone}{\varphi}

\newcommand{\disc}[0]{\mathcal{D}}
\newcommand{\trule}[0]{\mathcal{L}}
\newcommand{\secrule}[0]{(\rightarrow)}
\newcommand{\crule}[0]{\mathcal{C}}

\newcommand{\TheForensicator}{\mathit{TF}}
\newcommand{\FireEye}{\mathit{FE}}
\newcommand{\CrowdStrike}{\mathit{CS}}

\begin{document}

\title{A Formal Approach to Analyzing Cyber-Forensics Evidence}

\author{Erisa Karafili\inst{1} \and Matteo Cristani\inst{2} \and Luca Vigan\`o\inst{3}}
\institute{Department of Computing, Imperial College London, UK\\
\email{e.karafili@imperial.ac.uk}
\and 
Dipartimento di Informatica, Universit\`a di Verona, Italy \\
\email{matteo.cristani@univr.it}
\and
Department of Informatics, King's College London, UK\\
\email{luca.vigano@kcl.ac.uk}
}

\maketitle

\begin{abstract}
The frequency and harmfulness of cyber-attacks are increasing every day, and with them also the amount of data that the cyber-forensics 
analysts need to collect and analyze.
In this paper, we propose a formal analysis process that allows an analyst to filter the enormous amount of evidence collected  and either identify crucial information about the attack (e.g., when it occurred, its culprit, its target) or, at the very least, perform a pre-analysis to reduce the complexity of the problem in order to then  draw conclusions more swiftly and efficiently. 

We introduce the Evidence Logic $\Hist$ for representing simple and derived pieces of evidence from different sources. 
We propose a procedure, based on monotonic reasoning, that rewrites the pieces of evidence with the use of tableau rules,
based on relations of trust between sources and the reasoning behind the derived evidence, and yields a consistent set of pieces of evidence. 
As proof of concept, we apply our analysis process to a concrete cyber-forensics case study.

\end{abstract}
\section{Introduction}

The frequency and harmfulness of cyber-attacks are increasing every day, and with them also the amount of data that cyber-forensics 
analysts need to collect and analyze. In fact, forensics investigations often  produce an enormous amount of evidence. The pieces of evidence are produced/collected by various sources, which can be humans (e.g., another analyst) or forensic tools such as intrusion detection system (IDS), traceback systems, malware analysis tools, and so on. 

When a forensics analyst evaluates a cyber-attack, she first collects all the relevant evidence containing information about the attack and then checks the sources of the evidence in order to evaluate their reliability and to resolve possible inconsistencies arising from them. 
Based on the collected information, which might be different depending on the information sources and the trust relation between the analyst and the sources, the analyst might reconstruct different, possibly faulty, courses of events. 
State of the art approaches don't really manage to cope well with such situations.

To reason about the collected evidence, we need to formalize the fact that the analyst trusts more some sources than others for particular pieces of evidence, e.g., a source $S_1$ is more trusted than another source $S_2$ for attack similarity as tool $S_1$ specializes in malware analysis whereas tool $S_2$ specializes in deleted data. We also need to distinguish between the evidence and its interpretation that an analyst may consider in order to perform a correct analysis and attribution of the cyber-attack.

Our main contribution in this paper is the introduction of the \emph{Evidence Logic $\Hist$}, which allows an analysts to represent the different pieces of evidence, together with their  sources and relations of trust, and reason about them by eliminating the conflicting pieces of evidence during the analysis process. 

As a concrete motivating example, consider the data breach of the Democratic National Committee (DNC) network, during the last US presidential campaign, when Wikileaks and other websites published several private emails in October and November 2016. 
DNC used the services of a cyber-security company, CrowdStrike,
to mitigate the attacks and to conduct a forensics investigation.
CrowdStrike stated that the main attack occurred between March and April 2016, and identified it as a \emph{spear phishing} campaign that used Bitly accounts to shorten malicious URLs. 
The phishing campaign was successful as different IDs and passwords were collected. 

However, TheForensicator, an anonymous analyst, stated that the attack actually occurred on the 5th of July 2016, not in March/April, as the metadata released by an alleged attacker were created on the 5th of July 2016, and the data-leak occurred physically as the data were transferred at the speed of around 23 MB/s, and this speed is possible only during a physical access. 
Another cyber-security company, FireEye, stated that it is possible to have a non-physical data-transfer speed of 23 MB/s. 
What should an analyst conclude from these discording statements and pieces of evidence? How can a decision be made?

$\Hist$ is able to deal with this type of discordances, and based on relations of trust on the sources and reasonings, to arrive at a certain conclusion. 
$\Hist$ is composed of three separate layers: the first layer $\Hist_E$ deals with pieces of evidence, the second layer $\Hist_I$ focuses on the evidence interpretations, and the third layer $\Hist_R$ focuses on the reasoning involved in the evidence. Reasoning with $\Hist$ amounts to applying a rewriting system that spans formulas in all three levels to reach a conclusion, ruling out discordances and inconsistencies. Applying the reasoning process of $\Hist$ to the different pieces of evidence from the various sources, the analyst can decide the type of the attack and when it occurred. For instance, regarding the speed of transferability, if the analyst trusts FireEye more than TheForensicator, then she does not take into consideration the evidence that the data transfer was physical. Hence, she concludes that the attack occurred during March/April 2016 and not in July 2016.

We proceed as follows. In Section~\ref{sec:frame} and Section~\ref{sec:semantics}, we give the syntax and semantics of the Evidence Logic $\Hist$, respectively. In Section~\ref{sec:rewriting}, we introduce the rules of the rewriting system of $\Hist$ and we give a concrete procedure that uses the rewriting rules to prove the satisfiability of a given $\Hist$-theory (which is a finite and non-empty set of formulas of the three layers of $\Hist$). We prove the rewriting system to be sound and the procedure to be correct 
(the proofs of the theorems are given in the Appendix).
Section~\ref{sec:related} concludes the paper by discussing related work and future work.

\section{The Syntax of the Evidence Logic $\Hist$}
\label{sec:frame}

The \emph{Evidence Logic} $\Hist$ that we propose enables a cyber-forensics analyst to represent the various plausible pieces of evidence that she collected from different sources and reason about them. To that end, the analyst should distinguish between the evidence and its interpretation. 
In a nutshell:
\begin{itemize}
\item \emph{evidence} represents information related to the attack, where a given (piece of) evidence usually represents an event, its occurrence and the source of the information of the occurrence of the event (which can be another analyst, a cyber-forensics tool, etc.);
\item \emph{evidence interpretation}
represents what the analyst thinks\footnote{We deliberately use the verb ``thinking'' to avoid suggesting any epistemic or doxastic flavor, as in $\Hist$ we do not 
consider the modalities of knowledge or belief.} about the occurrence of an event $e$ and about the occurrences of the  
events causing $e$.
\end{itemize}
$\Hist$ contains two types of well-formed formulas: \emph{labeled formulas}, to formalize the pieces of evidence and interpretation, and \emph{relational formulas}, to formalize relations of \emph{trust} between sources of evidence and their reasonings. 
$\Hist$ also contains a rewriting system (composed of a set of tableau rules) to build the analyst's interpretations from forensics evidence.
For the sake of readability, we omit to model explicitly the analyst who is reconstructing and attributing the cyber-attack, but we simply silently assume her existence.

$\Hist$ is composed of three separate layers: the first layer $\Hist_E$ shows how the well-formed formulas for pieces of evidence are built, the second layer $\Hist_I$ focuses on the evidence interpretations,
and the third layer $\Hist_R$ focuses on the reasoning involved in the evidence. In the following, we discuss each of the layers in detail.

\subsection{$\Hist_E$: Evidence}
\label{subsec:hist-doc}

\begin{definition}
Given $t,t_1,\dots t_n \in \Temp$, $a,a_1, \dots a_n \in \Agt$, $r_1, r_2 \in \mathcal{R}$, $p \in \Vars_S$ and $\phi, \phi_1,\allowbreak \ldots,\allowbreak \phi_n \in \Lit$,
the set $\wffHLzero$ of
formulas of $\Hist_E$ is 
\begin{displaymath}
\begin{array}{l}
\wffHLzero ::=  
\ddoc{a}{t}{\phi} 
\; \big| \;
\idoc{\ddoc{a}{t}{\phi}}{\ddoc{a_1}{t_1}{\phi_1} \mid \ldots \mid \ddoc{a_n}{t_n}{\phi_n}}_r
\; \big| \;
a_1 \trust_p a_2 
\; \big| \;
r_1 \prec r_2 
\end{array}
\end{displaymath}
\end{definition}

We introduce all these notions, and the four kinds of formulas, step by step. 
A piece of evidence asserts what a source thinks about the temporal
occurrence of an event, i.e., whether an event occurred or not 
in a particular instance of time. 
To formalize this, we use two finite\footnote{In principle, there is
nothing in our logic that prevents us from considering countable,
possibly infinite, sets of labels, but here we consider finite sets for
simplicity.} and disjoint sets of labels,
\begin{itemize}
\item \emph{source labels} $\Agt=\{a_1, a_2, \ldots , a_n\}$ for
forensic sources, which we call \emph{agents}, regardless of whether they are human or not, and
\item \emph{temporal labels} $\Temp=\{t_1, t_2, \ldots ,t_m\}$ for
instants of time,
\end{itemize}
along with 
\begin{itemize}
\item a set of \emph{propositional variables} $\Vars = \{p_1, p_2, \ldots, p_n\}$ that represent the occurrences of \emph{forensics events} (so that $p$ represents the occurrence of an event and $\neg p$ represents that $p$ does not occur), 
\item a set of \emph{reasoning rules} (or simply \emph{reasonings}) $\mathcal{R}= \{r_1, r_2, \ldots, r_l\}$ that represent the reasoning used by the agents to conclude further evidence.
\end{itemize}
The set of \emph{literals} $\Lit = \{p_1, \neg p_1, 
\ldots, p_n, \neg p_n\}$ consists of each propositional variable and its negation. We write $\phi$, possibly subscripted, to denote a literal.

Instants of time are labels associated to elements of a single given stream. Thus, the labels that represent the instants of time cannot be processed for consistency, and no
assertions regarding relations between them is allowed.

\begin{example}
Consider again the motivating example that we discussed in the Introduction. The set of agents is composed of the analyst (whose existence we silently assume) and the sources
CrowdStrike ($\CrowdStrike)$, TheForensicator ($\TheForensicator$) and
FireEye ($\FireEye$); thus, $\Agt = \{\CrowdStrike, \TheForensicator, \FireEye\}$.
The sources make statements about events occurring in two instants of time: ``March/April 2016'' and ``5th of July 2016'' represented respectively by $t_1$ and $t_2$.
\hfill $\Box$
\end{example}

We formalize two different types of evidence: simple and derived one. 
The \emph{simple evidence} is a labeled formula of the form
\begin{displaymath}
\ddoc{a}{t}{\phi}\,,
\end{displaymath}
expressing that the agent represented by the source label $a$ thinks
that the literal $\phi$ is true at the instant of time represented by the temporal label $t$. For short, we will say that $a$ thinks that $\phi$ is true at $t$.

\begin{example}
The simple evidence $\FireEye: (t_2: \mathit{SpeedTr(23MB/s))}$ expresses that $\FireEye$ states that the non-physical transferability speed, $\mathit{SpeedTr}$, can be $\mathit{23MB/s}$ at $t_2$. 
\hfill $\Box$
\end{example}

The \emph{derived evidence} is a labeled formula of the form
\begin{displaymath}
\idoc{\ddoc{a}{t}{\phi}}{\ddoc{a_1}{t_1}{\phi_1} \mid \ddoc{a_2}{t_2}{\phi_2} \mid 
\ldots \mid \ddoc{a_n}{t_n}{\phi_n}}_r\,,
\end{displaymath}
expressing that agent $a$ thinks that $\phi$ is true at
instant of time $t$ \emph{because} of reasoning 
$r$, where $a_1$ thinks that $\phi_1$ is true at
$t_1$, $\ldots$ and $a_n$ thinks that $\phi_n$ is true at $t_n$. In other words, based on $r$, $a$ thinks 
that $\phi$ is \emph{caused}\footnote{We use the term ``cause'' to describe the events that an agent thinks were the preconditions for 
a certain derived evidence. In this work, we will not focus on the causality relationships between events.} 
by $\phi_1, \cdots, \phi_n$ 
(with their respective time instants and agents).
The reasoning $r$ of the derived evidence $a:(t:\phi)$ is composed of simple and/or derived pieces of evidence.
We include a constraint in our syntax that does not permit cycles
between derived pieces of evidence, so that if $\idoc{\ddoc{a_i}{t_i}{\phi_i}}{\cdots \mid \ddoc{a_j}{t_j}{\phi_j} \mid \ldots}\,_r$, then we do not accept in our language the formula $\idoc{\ddoc{a_j}{t_j}{\phi_j}}{\cdots \mid \ddoc{a_i}{t_i}{\phi_i} \mid \ldots}\,_{r'}$. 

A reasoning $r$ can be used by different agents to arrive at the same conclusion (derived evidence), using the same pieces of evidence. An agent can use different reasonings $r_i, \cdots, r_j$ to conclude the same derived evidence, where the pieces of evidence used by the reasonings are different from one reasoning to the other.

\begin{example}
$\CrowdStrike$ says that the $\mathit{Attack}$ occurred at time $t_1$,
based on reasoning $r_1$ and $\CrowdStrike$'s evidence about a spear phishing campaign $\mathit{SpPhish}$ and its success $\mathit{SucPhish}$ at $t_1$. The latter is based on $r_2$ and $\CrowdStrike$'s pieces of evidence that in $t_1$: the malicious link was clicked $\mathit{LinkCl}$, the malicious form was filled $\mathit{FFill}$, 
 and the data were stolen $\mathit{DStolen}$. We thus have:
\begin{displaymath}\small
\begin{array}{l}
\idoc{\ddoc{\CrowdStrike}{t_1}{\mathit{Attack}}}{\ddoc{\CrowdStrike}{t_1}{\mathit{SpPhish}} \mid \ddoc{\CrowdStrike}{t_1}{\mathit{SucPhish}}}_{r_1} \\
\idoc{\ddoc{\CrowdStrike}{t_1}{\mathit{SucPhish}}}{\ddoc{\CrowdStrike}{t_1}{\mathit{\mathit{LinkCl}}} \mid \ddoc{\CrowdStrike}{t_1}{\mathit{FFill}} \mid \ddoc{\CrowdStrike}{t_1}{\mathit{DStolen}}}_{r_2} 
\end{array}
\end{displaymath}
Instead, $\TheForensicator$ says that based on $r_3$ the $\mathit{Attack}$ occurred at $t_2$ because the metadata $\mathit{MetaC}$ were created at $t_2$ and the access was physical $\mathit{PhysA}$. The latter is true because $\TheForensicator$ states that it is not true that $\mathit{SpeedTr}$ is $\mathit{23MB/s}$:
\begin{displaymath}\small
\begin{array}{l}
\idoc
{\ddoc{\TheForensicator}{t_2}{\mathit{Attack}}}
{\ddoc{\TheForensicator}{t_2}{\mathit{MetaC}} 
\mid 
\ddoc{\TheForensicator}{t_2}{\mathit{PhysA}}}_{r_3} \\
\idoc{\ddoc{\TheForensicator}{t_2}{\mathit{PhysA}}}{\ddoc{\TheForensicator}{t_2}{\mathit{\neg \mathit{SpeedTr(23MB/s)}}}}_{r_4} 
\end{array}
\end{displaymath}
\hfill $\Box$
\end{example}

To allow an analyst to distinguish the events that can be expressed by a simple or derived evidence, the set of propositional variables $\Vars$
is composed by two disjoint subsets $\Vars_S$ and $\Vars_D$ that respectively represent the events that can be part of simple and 
derived evidence, i.e., $\Vars = \Vars_S \cup \Vars_D$ with $\Vars_S\cap  \Vars_D = \emptyset$. By extension, we write $\phi \in \Lit_S$ if $\phi$ is $p$ or $\neg p$ with $p \in \Vars_S$, and $\phi \in \Lit_D$ if $\phi$ is $p$ or $\neg p$ with $p \in \Vars_D$.

Hence, if $\phi \in \Lit_S$, then $a:(t:\phi)$ is a simple evidence, whereas if $\phi \in \Lit_D$ and $\phi_i \in \Lit$ for $i \in \{1,\ldots,n\}$, then $\idoc{\ddoc{a}{t}{\phi}}{\ddoc{a_1}{t_1}{\phi_1} \mid \ldots \mid \ddoc{a_n}{t_n}{\phi_n}}_r$ is a derived evidence.
For simplicity, 
we will assume that a variable that represents an event given by a simple evidence is part of $\Vars_S$ and that a variable that represents an event given by a derived evidence is part of $\Vars_D$.

\begin{example}
The variables of the events of our example are divided in the two following disjoint subsets:
$\Vars_S = \{\mathit{SpPhish}, \mathit{LinkCl}, \mathit{FFill}, \mathit{DStolen}, \mathit{MetaC}, \allowbreak \mathit{SpeedTr(23MB/s)}\}$  and $
\Vars_D = \{\mathit{Attack}, \mathit{SucPhish}, \mathit{PhysA}\}$.
\hfill $\Box$
\end{example}

The temporal labels can have \emph{temporal constraints} such as $t_1 \leq t$ or $t_n < t$. As we consider time to be \emph{linear} and every instant of time is mapped to only one element of the natural numbers,
our syntax doesn't need to include a precedence relation, as it represents the classical precedence relation between natural numbers.

In addition to ordering events with respect to time, the analyst can
consider the \emph{trust(worthiness)} relations that she has with the sources with respect to their assertions in the simple evidence, i.e., she might think that one source is more reliable than another one with respect to a particular event (and its negation). For instance, $a_i$ might be more trustworthy than $a_j$ with respect to an event $p$ (and thus also $\neg p$), where $p\in \Vars_S$. In general, if there exists a trust relation between two agents $a_i, a_j \in \Agt$ for an event $p \in \Vars_S$, then we have that either
$a_i$ is more trustworthy than $a_j$ with respect to $p$, or
$a_j$ is more trustworthy than $a_i$ with respect to $p$. 
We formalize this by introducing the \emph{trust relation $\trust : \Agt
\times \Agt \times \Vars_S$}. 
Then, the \emph{relational formula} $a_i \trust_p a_j$ expresses that $a_j$ is more trustworthy than $a_i$ with respect to $p$.

\begin{example}
We write $\TheForensicator \trust_{\mathit{SpeedTr(23MB/s)}} \FireEye$ to formalize that the analyst trusts $\FireEye$ more than $\TheForensicator$ w.r.t.~the simple evidence $\mathit{SpeedTr(23MB/s)}$.
\hfill $\Box$
\end{example}

The analyst can also consider the trust(worthiness) relations about the reasonings she used. In particular, given two conflicting derived pieces of evidence that use two different reasonings, the analyst can consider one reasoning to be more trustworthy than the other one. We formalize this by introducing the \emph{trust relation} $\prec: \mathcal{R} \times \mathcal{R}$. Then, the \emph{relational formula $r_i \prec r_j$} expresses that reasoning $r_j$ is more trustworthy than reasoning $r_i$.

\subsection{$\Hist_I$: Evidence Interpretation}
\label{subsec:hist-inter}

An \emph{evidence interpretation} (or simply \emph{interpretation})
is what the cyber-forensics analyst thinks that is plausibly true. To formalize this, the second level $\Hist_I$ of $\Hist$ employs a simplified variant of \emph{Linear Temporal Logic (LTL)}.
$\Hist_I$  inherits from $\Hist_E$ the temporal labels $\Temp$, the reasonings $\mathcal{R}$ and the propositional variables $\Vars$ (and thus also the literals $\Lit$).

\begin{definition}
Given $t, t_1, \ldots t_n \in \Temp$, 
$\phi, \phi_1, \ldots, \phi_n \in \Lit$, $r \in \mathcal{R}$ and 
$\phi' \in \Lit_D$, the set $\wffHLone$ of 
formulas of $\Hist_I$, called \emph{interpretations}, is 
\begin{displaymath}
\wffHLone ::= \ 
t:\phi \mid t_1: \phi_1 \wedge t_2:\phi_2 \wedge \ldots \wedge t_n:\phi_n \rightarrow_r t:\phi'
\end{displaymath}
\end{definition}

$t:\phi$ means that the analyst thinks that $\phi$ is true at $t$, whereas $t_1: \phi_1 \, \wedge \ldots \wedge \, t_n:\phi_n \rightarrow_r t:\phi'$ means that the analyst thinks that $\phi'$ is true at the instant of time $t$, based on reasoning $r$, if $\phi_i$ is true at $t_i$ for all $i \in \{1, \ldots, n\}$.
An interpretation expresses a positive event (the occurrence of an event, e.g., $t:p$) or a negative event (the non occurrence of an event, e.g., $t:\neg p$). The interpretations of the temporalized logic $\Hist_I$ that express positive events represent the \emph{plausible pieces of evidence} and help the analyst to perform a correct analysis.

\subsection{$\Hist_R$: Evidence Reasoning}\label{subsec:reasoning}

The third layer $\Hist_R$ of $\Hist$ is the \emph{reasoning layer} and deals with the reasoning behind the derived evidence. Also $\Hist_R$ uses LTL and inherits from $\Hist_E$ the temporal labels $\Temp$, the reasonings $\mathcal{R}$ and the  propositional variables $\Vars$.
\begin{definition}
Given $t \in \Temp$, $\phi \in \Lit_D$ and $r, r_k, \ldots, r_l \in \mathcal{R}$,
the set $\psi$ of 
formulas of $\Hist_R$ is 
\begin{displaymath}
\psi ::= \ 
 (t:\phi)_r \mid (t:\phi)_{r, r_k, \ldots, r_l}.
\end{displaymath}
\end{definition}

The \emph{reasoning} involves only derived pieces of evidence, 
which we can divide in two types.
The \emph{first type of derived evidence}, $(t:\phi)_r$, is composed of only simple pieces of evidence; in this case, the only reasoning is the one made by the agent that states the derived evidence $\idoc{\ddoc{a}{t}{\phi}}{\ddoc{a_1}{t_1}{\phi_1} \mid \ldots \mid \ddoc{a_j}{t_j}{\phi_j}}_r$, where $\phi_i \in \Lit_S$ for $i \in \{1,\ldots,j\}$.
The \emph{second type of derived evidence}, $(t:\phi)_{r, r_k, \cdots, r_l}$, is composed of simple and derived pieces of evidence; in this case, the reasoning involves the one of the agent stating the derived evidence, $\idoc{\ddoc{a}{t}{\phi}}{\ddoc{a_1}{t_1}{\phi_1} \mid \ldots \mid \ddoc{a_j}{t_j}{\phi_j}}_r$, as well as all the reasonings involved in the derived pieces of evidence $\phi_i \in \Lit$ for $i \in \{1,\ldots,j\}$ that are part of reasoning $r$.
The first type is clearly a special case of the second one, but we keep both for the sake of understandability.

\section{The Semantics of the Evidence Logic $\Hist$}
\label{sec:semantics}

\begin{definition}
The \emph{plausible pieces of evidence} are a finite stream of temporal instants in which at every instant of time we may associate a finite number of occurrences or not occurrences of an event. 
\end{definition}

The agents are associated to a given finite set of values, and the trust relationship between agents is interpreted as a partial order on the agents. The same applies to the reasonings: they are associated to a finite set of values and the trust relationship between them is interpreted as a partial order on the reasonings.

\begin{definition}\label{semantic}
A model of the evidence language $\Hist$ is a tuple
\begin{displaymath}
\mathfrak{M}=\{\Agt^\Interp, \hfunc^\Interp, \po^\Interp, \mathcal{TR}^\Interp, \Vars^\Interp, \mathcal{R}^\Interp, \Interp \}
\end{displaymath}
where: 
\begin{itemize}
\item $\Interp$ is the interpretation function, where we interpret time as natural numbers, i.e., $t^\Interp \in \mathbb{N}$ for every $t \in \Temp$;
\item $\Agt^\Interp = \{{a_1}^\Interp, \ldots {a_n}^\Interp\} = \{a_1, \ldots a_n\} = \Agt$ is a set of agents;
\item $\hfunc^\Interp$ is a function that maps pairs of instants of time and formulas to $\mathit{True}$ or $\mathit{False}$ (this mapping is used in the second layer of $\Hist$, where we have $t:\phi$);
\item $\po^\Interp = \{{\trust_{p_i}}^\Interp\}$ is a set of trust relationships between agents, where for every $p\in \Vars_S$, if ${\trust_p}^\Interp \in \po^\Interp$, then ${\trust_p}^\Interp = \{({a_i}^\Interp, {a_j}^\Interp)\mid a_i \trust_p a_j\}^{*}$, where $*$ is the transitive closure of $\trust$;
\item $\mathcal{TR}^\Interp = \{\prec^\Interp\}$ is a set of trust relationship between reasonings, where for every $r\in \mathcal{R}$, if ${\prec}^\Interp \in \mathcal{TR}^\Interp$, then ${\prec}^\Interp = \{({r_i}^\Interp, {r_j}^\Interp)\mid r_i \prec r_j\}^{*}$,
where $*$ is the transitive closure of $\prec$;
\item $\Vars^\Interp = \Vars = \{p_1, \cdots, p_n\}$ is a set of events;
\item $\mathcal{R}^\Interp = \mathcal{R} = \{r_1, r_2, \cdots, r_n\}$ is a set of reasoning rules.
\end{itemize}
\end{definition}

Slightly abusing notation, we use $\Agt^\Interp$ to denote also a set of functions, each function ${a_i}^\Interp:\mathbb{N}\times \Lit \rightarrow \{\mathit{True}, \mathit{False}\}$ associating to an instant of time $t$ a set of formulas that are true at $t$, where
${a_i}^\Interp(t, p) = \mathit{True}$ when $\ddoc{a_i}{t}{p}$ is asserted,
${a_i}^\Interp(t, p) = \mathit{False}$ when $\ddoc{a_i}{t}{\neg p}$ is asserted,
${a_i}^\Interp(t, \neg p) = \mathit{True}$ when $\ddoc{a_i}{t}{\neg p}$ is asserted,
${a_i}^\Interp(t, \neg p) = \mathit{False}$ when $\ddoc{a_i}{t}{p}$ is asserted.
The same applies to $\mathcal{R}^\Interp$, each function ${r_i}^\Interp:\mathbb{N}\times \Lit \rightarrow \{\mathit{True}, \mathit{False}\}$ such that
$(t, p)_{{r_i}^\Interp} = \mathit{True}$ when $(t:p)_{r_i}$ is asserted,
$(t, p)_{{r_i}^\Interp} = \mathit{False}$ when $(t:\neg p)_{r_i}$ is asserted,
$(t, \neg p)_{{r_i}^\Interp} = \mathit{True}$ when $(t: \neg p)_{r_i}$ is asserted,
$(t, \neg p)_{{r_i}^\Interp} = \mathit{False}$ when $(t:p)_{r_i}$ is asserted.
Thus, ${a_i}^\Interp$ and 
$\hfunc^\Interp$
 both associate to every $t$ a set of formulas that are true at $t$; 
 the difference is that we use the ${a_i}^\Interp$ in the evidence layer $\Hist_E$ and 
 $\hfunc^\Interp$
  in the  interpretation layer $\Hist_I$.

In order to avoid having clear contradictions in the models, we constrain the functions $\Agt^\Interp$ and $\mathcal{R}^\Interp$ as follows:
\begin{itemize}
\item[] $(\mathit{COND}_1)$: If ${a}^\Interp(t, p) = \mathit{True}$, then $a^\Interp (t', p)= \mathit{False}$ for all $t'\neq t$.
\item[] $(\mathit{COND}_2)$: If $(t, p)_{r^\Interp} = \mathit{True}$, then $(t', p)_{r^\Interp}= \mathit{False}$ for all $t'\neq t$.
\item[] $(\mathit{COND}_3)$: Every ${\trust_p}^\Interp$ is an irreflexive and antisymmetric relation.
\item[] $(\mathit{COND}_4)$: Every ${\prec}^\Interp$ is an irreflexive and antisymmetric relation.
\end{itemize}

A \emph{$\Hist$-theory} is built by using $\Hist$ to express a finite and non-empty set of formulas 
of the three layers,
 including the trust relationships.

\section{The Rewriting System of the Evidence Logic $\Hist$}
\label{sec:rewriting}

\begin{table}[!t]
\caption{Rules of the rewriting system of $\Hist$\label{table:rules}}
\begin{center}
\scalebox{1}{
\fbox{
\begin{tabular}{c}
$\infer[\trule_1]{
\Theory \cup \{t:\phi\}
}{
\ddoc{a}{t}{\phi}
}
$
\qquad \quad \quad
$\infer[\trule'_1]{
\Theory\cup \{t:\phi\}
}{
(t:\phi)_{r, \cdots, r_n}
}$\\

\\
$
\infer[\trule_2]{
\Theory \cup \{\ddoc{a_i}{t_i}{\phi_i}\}_{\forall i\in\{1, \cdots, n\}\   \phi_i \in \Lit_S
}\cup 
\{t_1:\phi_1\wedge \dots \wedge t_n:\phi_n\rightarrow_r t:\phi\}
}{
\idoc{\ddoc{a}{t}{\phi}}{\ddoc{a_1}{t_1}{\phi_1} \mid \dots
\mid \ddoc{a_n}{t_n}{\phi_n}}_r
}$\\

\\ 
$
\infer[(\rightarrow)]{
\Theory \cup \{(t:\phi)_r\}
}{
t_1:\phi_1 \wedge \cdots \wedge t_n : \phi_n \rightarrow_r t:\phi &
t_1:\phi_1 & \cdots & t_n:\phi_n
}$\\

\\
$
\infer[(\rightarrow')]{
\Theory \cup \{(t:\phi)_{r, r_1/ \emptyset, \cdots, r_n/ \emptyset\}}
}{
t_1:\phi_1 \wedge \cdots \wedge t_n : \phi_n \rightarrow_r t:\phi &
(t_1:\phi_1)_{r_1/ \emptyset} & 
\cdots & (t_n:\phi_n)_{r_n/ \emptyset}
}$\\

\\
$
\infer[\trustTrans]{
\Theory \cup \{a_1 \trust_p a_3\}
}{
a_1 \trust_p a_2 & a_2\trust_p a_3
}
\qquad \qquad 
\infer[\RtrustTrans]{
\Theory \cup \{r_1 \prec r_3\}
}{
r_1 \prec r_2 & r_2\prec r_3
}$\\

\\
$\infer[\disc_1 {[*]}]{
\Theory \cup \{\ddoc{a_1}{t_2}{\neg \phi}, \ddoc{a_2}{t_1}{\neg \phi}\}
}{
\ddoc{a_1}{t_1}{\phi}  \quad \ddoc{a_2}{t_2}{\phi}
}$ 
\qquad \quad
$
\infer[\disc'_1 {[*]}]{
\Theory \cup \{(t_2:\neg \phi)_{r_1}, (t_1: \neg \phi)_{r_2}\}
}{
(t_1:\phi)_{r_1}  \quad (t_2: \phi)_{r_2}
}$\\

\\
$
\infer[\disc''_1 { [*]}]{
\Theory \cup \{(t_2:\neg \phi)_{r_1, r_i, \cdots, r_j}, (t_1: \neg \phi)_{r_2, r_m, \cdots, r_n}\}
}{
(t_1:\phi)_{r_1, r_i, \cdots, r_j}  \quad (t_2: \phi)_{r_2, r_m, \cdots, r_n}
}$\\

\\
$\infer[\disc_2{[\circ]} ]{
\Theory \setminus \{\ddoc{a_2}{t}{\neg \phi}\}
}{
a_2 \trust_p a_1 & \ddoc{a_1}{t}{\phi} & \ddoc{a_2}{t}{\neg \phi}
}$
\qquad \quad
$\infer[\disc'_2 { } ]{
\Theory \setminus \{(t:\neg \phi)_{r_2}\}
}{
r_2 \prec r_1 & (t:\phi)_{r_1} & (t: \neg \phi)_{r_2}
}$\\

\\
$
\infer[\disc''_2 {[\star]} ]{
\Theory \setminus \{(t:\neg \phi)_{r_2, r_m, \cdots, r_n}, \Delta\}
}{
r_2 \prec r_1 & (t:\phi)_{r_1, r_i, \cdots, r_j} & (t: \neg \phi)_{r_2, r_m, \cdots, r_n} & \Delta
}$\\

\\
$\infer[\crule_C  { [*]}]{
\bot
}{
\ddoc{a}{t_1}{\phi} & \ddoc{a}{t_2}{\phi}
}

\qquad
\infer[\crule'_C  { [*]}]{
\bot
}{
(t_1:\phi)_{r, \cdots, r_i} & (t_2: \phi)_{r, \cdots, r_j}
}$\\

\\
$\infer[\crule_T]{
\bot
}{
a \trust_p a
}
\qquad \qquad 
\infer[\crule'_T]{
\bot
}{
r \prec r
}
\qquad \qquad
\infer[\crule_P]{
\bot
}{
t:\phi & t:\neg \phi 
}$\\
{\scriptsize
where $[*] = [t_1 \neq t_2]$, 
$[\circ] = [\phi \in \Lit_S, \, \phi \textrm{ is } p \textrm{ or } \neg p]$, and
$[\star] = [\Delta = \bigcup_{(t':\psi)_\varrho \in \Theory\, \textit{s.t.}\, r_2 \in \varrho} (t':\psi)_\varrho]$
}
\end{tabular}
}
}
\end{center}
\end{table}

In this section, we introduce the \emph{rewriting system} of $\Hist$, which, as proved in Theorem~\ref{theosoundness}, is sound. Given pieces of evidence, the rewriting system yields a consistent set of pieces of evidence by translating pieces of evidence into interpretations and reasonings, and resolving their discordances.
In particular, the rewriting system uses the tableau rules in Table~\ref{table:rules} and applies them via the procedure in Algorithm~\ref{algo2:order}: given a $\Hist$-theory $\Theory$, which is a non-empty set of formulas, the rewriting system generates a new set of formulas $\widehat{\Theory}$ that replaces $\Theory$, where the single rewritings correspond to interpretations and reasonings of the theory.
More specifically, the rewriting system takes a $\Hist$-theory of the first level and rewrites 
it into a $\Hist$-theory of the second and third level, until all the formulas are interpreted, by adding formulas to the theory or eliminating formulas from the theory, with the use of \emph{insertion} or \emph{elimination rules}.

The rules in Table~\ref{table:rules} have as premises (above the line) a set of formulas and a $\Hist$-theory $\Theory$, although we don't show $\Theory$ for readability, and as conclusion (below the line) $\Theory \cup \{\phi\}$ or $\Theory \setminus \{\phi\}$, depending on whether the rule is an insertion or elimination rule that respectively
 inserts or eliminates $\phi$.
The \emph{insertion} rule introduce formulas for resolving temporal discordances 
and 
interpreting pieces of evidence.
The \emph{elimination} rules resolve discordances of event occurrences 
by deleting formulas based on the trust relations among agents and reasonings.
The \emph{closure rules} are part of the elimination rules, and
discover discordances in $\Theory$ that cannot be solved,
eliminate all the formulas of the set $\Theory$ and give as result the empty set $\bot$.

\subsection{Rewriting Rules}\label{rules}
\label{sec:rewriting:temporal}

We now explain the rules in more detail, starting from  the transformation rules that transform the formulas into the various layer formulas. 

Rule $\trule_1$ transforms a simple evidence into a temporal formula 
of the interpretation layer, whereas $\trule'_1$ transforms formulas of the reasoning layer into a temporal formula of the interpretation layer. 
\begin{example}
An application of $\trule_1$ in our use example is:
\begin{displaymath}\footnotesize
\infer[\trule_1]{
\Theory\cup \{t_2:\mathit{SpeedTr(23MB/s)}\}
}{
\ddoc{\FireEye}{t_2}{\mathit{SpeedTr(23MB/s)}}
}   \tag*{$\Box$}
\end{displaymath}
\end{example}

Rule $\trule_2$ transforms a derived evidence into an interpretation formula and, if possible, also introduces new pieces of evidence. 
Thus, given a derived evidence $\idoc{\ddoc{a}{t}{\phi}}{\ddoc{a_1}{t_1}{\phi_1} \mid \dots \mid \ddoc{a_n}{t_n}{\phi_n}}_r$, $\trule_2$
inserts the temporal formula
$t_1:\phi_1 \wedge \ldots \wedge t_n:\phi_n \rightarrow_r t:\phi$ 
and all $\ddoc{a_i}{t_i}{\phi_i}$ for $\phi_i \in \Lit_S$ and $i \in \{1,\ldots, n\}$. 
Note that rule $\trule_2$ inserts
 in the theory only the simple pieces of evidence that were part of the reasoning, and not the derived ones, as we
expect the pieces of evidence that are part of their reasonings to be part of the theory too.
\begin{example}
In our example, $\trule_2$ is applied to all derived pieces of evidence given by the sources. 
When it applies to $\CrowdStrike$'s first evidence, it transforms only the simple evidence about the spear phishing campaign, but not the successful phishing evidence as it is a derived one. The same occurs to $\TheForensicator$'s first evidence where just $\mathit{MetaC}$ is introduced:
\begin{displaymath}\footnotesize
\renewcommand{\arraystretch}{2.5}
\begin{array}{c}
\infer[\trule_2]{
\Theory \cup \{\ddoc{\CrowdStrike}{t_1}{\mathit{SPhish}}\}\cup \{t_1: \mathit{SPhish} \wedge t_1: \mathit{SucPhis} \rightarrow_{r_1} t_1: \mathit{Attack}\}
}{
\idoc{\ddoc{\CrowdStrike}{t_1}{\mathit{Attack}}}{\ddoc{\CrowdStrike}{t_1}{\mathit{SPhish}} \mid \ddoc{\CrowdStrike}{t_1}{\mathit{SucPhish}}}_{r_1}
} \\
\infer[\trule_2]{
\Theory \cup \{\ddoc{\TheForensicator}{t_2}{\mathit{MetaC}}\} \cup \{t_2: \mathit{MetaC} \wedge t_2: \mathit{PhysA} \rightarrow_{r_3} t_2: \mathit{Attack}\}
}{
\idoc{\ddoc{\TheForensicator}{t_2}{\mathit{Attack}}}{\ddoc{\TheForensicator}{t_2}{\mathit{MetaC}} \mid \ddoc{\TheForensicator}{t_2}{\mathit{PhysA}}}_{r_3}
} \\
\infer[\trule_2]{
\Theory \cup \{\ddoc{\TheForensicator}{t_2}{\neg \mathit{SpeedTr(23MB/s)}}\}\cup \{t_2: \neg \mathit{SpeedTr(23MB/s)} \rightarrow_{r_4} t_2: \mathit{PhysA}\}
}{
\idoc{\ddoc{\TheForensicator}{t_2}{\mathit{PhysA}}}{\ddoc{\TheForensicator}{t_2}{\neg \mathit{SpeedTr(23MB/s)}}}_{r_4}
}
\end{array}
\end{displaymath}
Applying $\trule_2$ to the second evidence of $\CrowdStrike$ yields
$\Theory \cup \{\ddoc{\CrowdStrike}{t_1}{\mathit{LinkCl}},\, \ddoc{\CrowdStrike}{t_1}{\mathit{FFill}},\, \ddoc{\CrowdStrike}{t_1}{\mathit{DStolen}}\}$
and $t_1: \mathit{LinkCl} \wedge t_1:\mathit{FFill} \wedge t_1:\mathit{DStolen}  \rightarrow_{r_2} t_1: \mathit{SucPhish}$.  \hfill $\Box$
\end{example}

Rules $\secrule$ and $(\rightarrow')$ transform the interpretation  formulas introduced by $\trule_2$ into reasoning  formulas (derived evidence of the two types). 
\begin{example}
Applying $(\rightarrow)$ to $\CrowdStrike$'s derived pieces of evidence yields:
\begin{displaymath}\scriptsize
\infer[(\rightarrow)]{
\Theory \cup \{(t_1: \mathit{SucPhish})_{r_2}\}
}{
 t_1: \mathit{LinkCl} \wedge t_1:\mathit{FFill} \wedge t_1:\mathit{DStolen}  \rightarrow_{r_2} t_1: \mathit{SucPhish} \!
 & \!
 t_1:\! \mathit{LinkCl} & \! t_1: \! \mathit{FFill} & \! t_1: \! \mathit{DStolen}
}
\end{displaymath}
Applying $(\rightarrow')$ to the second type of derived evidence for $\CrowdStrike$ yields
\begin{displaymath}\footnotesize
\infer[(\rightarrow')]{
\Theory \cup \{(t_1: Attack)_{r_1,r_2}\}
}{
 t_1: \mathit{SPhish} \wedge t_1: \mathit{SucPhish} \rightarrow_{r_1} t_1: \mathit{Attack}
 & t_1: \mathit{SPhish}
 & (t_1: \mathit{SucPhish})_{r_2}
 } \tag*{$\Box$}
\end{displaymath}
\end{example}

The $\trust$ and $\prec$ relations are transitive ones. $\trustTrans$ and $\RtrustTrans$ extend the trust relations between agents and reasonings, e.g., if $a_1$ is less trusted than $a_2$ with respect to $p$, and $a_2$ is less trusted than $a_3$ with respect to $p$, then $\trustTrans$ inserts into the theory the conclusions that $a_1$ is less trusted than $a_3$ with respect to $p$ (the same applies to $\prec$ with $\RtrustTrans$).

The discordance resolution rules resolve temporal and factual discordances, where
events are instantaneous and not recurring.
A \emph{temporal discordance} about an event occurs when two agents state that it occurred in two different instants of time, e.g., Alice states that $x$ occurred at $t_1$ and Bob states that it occurred at $t_2$.
A \emph{factual discordance} about an event occurs when there are inconsistent statements about the occurrence of an event at an instant of time, e.g., Alice states that at $t$ occurred $p$ and Bob states that at $t$ did not occur $p$.

Rules $\disc_1$, $\disc'_1$ and $\disc''_1$ transform temporal discordances into factual ones, where $\disc_1$ works with simple pieces of evidence, $\disc'_1$ with derived pieces of evidence of the first type, and $\disc''_1$ with derived pieces of evidence of the second type (note that $\disc'_1$ is a special case of $\disc''_1$). 
Thus, if the $\Hist$-theory $\Theory$ contains the evidence belonging to two different agents about the same event $p$, occurring at two different instants, then the evidence of the occurrence or not of $p$ with respect to both agents and both instants of time are inserted in the theory. 

Rule $\disc_2$, $\disc'_2$ and $\disc''_2$ solve the factual discordances based on the relations of trust, where $\disc_2$ eliminates from the theory the evidence of the less trusted agent, whereas $\disc'_2$ and $\disc''_2$ eliminate the evidence of the less trusted reasoning. $\disc''_2$ eliminates also every evidence that has inside its reasoning the removed evidence, as captured by the side condition where $\Delta$ is the set of all derived pieces of evidence that have $r_2$ in their reasonings:
$\Delta = \bigcup_{(t':\psi)_\varrho \in \Theory\, \textit{s.t.}\, r_2 \in \varrho} (t':\psi)_\varrho]$,
where $\varrho = \{r_k, \cdots, r_l\}$.
\begin{example}
$\disc_2$  solves the discordance of the speed transfer:
\begin{displaymath}\scriptsize
\infer[\disc_2 ]{
\Theory \setminus \{\ddoc{\TheForensicator}{t_2}{\neg \mathit{SpeedTr(23MB/s)}}\}
}{
\begin{array}{c}
\TheForensicator \trust_{\mathit{SpeedTr(23MB/s)}} \FireEye \ \ \ddoc{\FireEye}{t_2}{\mathit{SpeedTr(23MB/s)}} \  \ \ddoc{\TheForensicator}{t_2}{\neg \mathit{SpeedTr(23MB/s)}}
\end{array}
}   \tag*{$\Box$}
\end{displaymath}
\end{example}

The rewriting system has five closure rules that correspond to five discordances that cannot be solved resulting in the empty theory $\bot$.
$\crule_C$ applies when an agent contradicts herself, 
$\crule'_C$ when a reasoning contradicts itself. 
$\crule_T$ and $\crule'_T$ apply when an agent/reasoning is more trusted than herself/itself (we avoid these types of conflicts in the semantics thanks to $\mathit{COND_1}$ and $\mathit{COND_2}$, where $\trust$ and $\prec$ are irreflexive).
Finally, $\crule_P$ captures contradictions of the second layer,
where two temporal formulas state the occurrence and non occurrence of
an event at the same instant of time. 
This discordance occurs when we were not able to solve it using the trust relations.

\begin{theorem}
	\label{theosoundness}
The rewriting system of $\Hist$ is sound.
\end{theorem}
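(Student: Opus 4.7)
The plan is to prove soundness by induction on the length of a rewriting derivation, with a case split on which rule is applied at the last step. The precise statement I would target is that, for every single-step rewriting $\Theory \to \widehat{\Theory}$, either $\widehat{\Theory}=\bot$ and $\Theory$ has no model in the sense of Definition~\ref{semantic}, or every model of $\Theory$ is also a model of $\widehat{\Theory}$ once the newly introduced $\Hist_I$ and $\Hist_R$ formulas are read through $\hfunc^\Interp$ and $\mathcal{R}^\Interp$. Soundness in the form stated in Theorem~\ref{theosoundness} then follows by a straightforward induction on the number of steps taken by the procedure in Algorithm~\ref{algo2:order}: if it closes with $\bot$, the original theory was unsatisfiable.

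The case analysis splits naturally into four groups. First, the transformation rules $\trule_1$, $\trule'_1$, $\trule_2$, $\secrule$ and $(\rightarrow')$: the semantic clauses connecting $\Agt^\Interp$, $\mathcal{R}^\Interp$ and $\hfunc^\Interp$ in Definition~\ref{semantic} propagate truth from the premises directly to the inserted interpretation or reasoning formula, so unfolding the definitions is all that is needed. Second, the transitivity rules $\trustTrans$ and $\RtrustTrans$ are immediate because $\trust_p^\Interp$ and $\prec^\Interp$ are explicitly defined as transitive closures. Third, the closure rules $\crule_C$, $\crule'_C$, $\crule_T$, $\crule'_T$ and $\crule_P$ are handled by showing that each premise pattern directly violates one of the constraints $\mathit{COND}_1$--$\mathit{COND}_4$, or the basic consistency of $\hfunc^\Interp$ at a single pair $(t,\phi)$, so no model can satisfy the premises and deriving $\bot$ is sound.

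The temporal-to-factual discordance rules $\disc_1$, $\disc'_1$ and $\disc''_1$ also follow from the semantics: $\mathit{COND}_1$ forces an agent that asserts $\phi$ at $t_1$ to make $a^\Interp(t_2,\phi)=\mathit{False}$ for every $t_2\ne t_1$, which by the convention that follows Definition~\ref{semantic} is precisely what asserting $\ddoc{a}{t_2}{\neg\phi}$ records; the reasoning-layer variants are the analogous argument with $\mathit{COND}_2$.

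The hard part I expect to be the elimination rules $\disc_2$, $\disc'_2$ and $\disc''_2$. Unlike the insertion rules, they remove a formula outright, so the soundness obligation is not trivial in either direction: I would need to show that, under the premises of the rule, any model of the full theory must already falsify the removed formula, so discarding it neither creates nor destroys models. For $\disc_2$ this amounts to observing that $a_2 \trust_p a_1$ combined with $\ddoc{a_1}{t}{\phi}$ leaves no room in a model (respecting $\mathit{COND}_1$ and the antisymmetry from $\mathit{COND}_3$) for an $a_2$-assertion of $\neg\phi$ on the same $p$. The step I expect to require the most care is the cascade built into $\disc''_2$, whose side condition also discards the whole set $\Delta$ of derived pieces of evidence whose reasoning involves the eliminated rule $r_2$; the argument must establish that each element of $\Delta$ is semantically unsupported once $r_2$ is dropped, i.e.\ that $\mathcal{R}^\Interp$ cannot evaluate those formulas to $\mathit{True}$ in any model consistent with the retained statement $r_2 \prec r_1$. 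Verifying that the $\Delta$ picked out by the side condition matches exactly the set the semantics forces to collapse is the most delicate bookkeeping in the whole argument.
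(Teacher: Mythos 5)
Your overall strategy---local soundness of each rule plus induction on the length of the derivation---is a genuinely different decomposition from the paper's, which argues globally: it defines open, closed and exhausted theories and proves (Lemma~\ref{lemma:open}, by contradiction, with a case split over the five \emph{closure} rules against the conditions $\mathit{COND}_1$--$\mathit{COND}_4$) that a satisfiable theory can only rewrite to an open and exhausted one, combining this with Lemma~\ref{basiclemma} on unsatisfiable theories. However, your local argument has a genuine gap at its base case: the single-step claim that ``every model of $\Theory$ is a model of $\widehat{\Theory}$'' fails for the transformation rules $\trule_1$, $\trule'_1$, $\secrule$ and $(\rightarrow')$ when they fire before the discordance rules. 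Take the satisfiable theory $\{\ddoc{a_1}{t}{p},\ \ddoc{a_2}{t}{\neg p},\ a_2\trust_p a_1\}$: two applications of $\trule_1$ insert $t:p$ and $t:\neg p$, and $\crule_P$ then closes a theory whose ancestor had a model. The system is sound only relative to the order of application fixed by Algorithm~\ref{algo2:order} (which forces $\disc_1,\disc_2$ to resolve the discordance before $\trule_1$ runs); this is precisely what the paper isolates in Lemma~\ref{lemma:contradiction} and Theorem~\ref{theorem:correct}, and an order-independent step-by-step induction cannot recover it. To repair your argument you would have to build the order constraint into the induction hypothesis, at which point you are reproving the order lemma you did not identify.

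The second problem is your treatment of the elimination rules. You propose to show that the premises of $\disc_2$ force every model to falsify the removed formula. The semantics does not support this: a model carries a separate assertion function ${a_i}^\Interp$ for each agent, and $\mathit{COND}_3$ only makes ${\trust_p}^\Interp$ irreflexive and antisymmetric---nothing ties the trust order to the agents' assertion functions. A model can simultaneously satisfy ${a_1}^\Interp(t,p)=\mathit{True}$, ${a_2}^\Interp(t,\neg p)=\mathit{True}$ and $a_2\trust_p a_1$, so the eliminated formula is \emph{not} semantically refuted; $\disc_2$ (and likewise $\disc'_2$ and $\disc''_2$ with its $\Delta$-cascade) is a pragmatic preference rule, sound only in the weak direction that deleting formulas preserves satisfiability. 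The paper accordingly claims only consequence-preservation for the non-closure rules (Lemma~\ref{verybasiclemma}) and concentrates its actual argument on showing that the closure rules fire only on genuinely unsatisfiable configurations.
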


\noindent
The proof of the theorem is in the Appendix.

\subsection{Rewriting Procedure}
\label{sec:rewriting-procedure}

We give a procedure that uses the rewriting rules to prove the satisfiability of a given $\Hist$-theory. This procedure defines an order of application of the rules that rewrites the $\Hist$-theory as defined in Algorithm~\ref{algo2:order}. Theorem~\ref{theorem:correct} tells us that the procedure is correct (the theorem is proved in the Appendix).
\begin{algorithm}[t!]\small
\caption{Algorithm for the Rewriting Procedure}
\label{algo2:order}
\begin{algorithmic}[1]
\WHILE{We can apply $\trustTrans$, $\RtrustTrans$ rules}
	\STATE Apply $\trustTrans$ and $\RtrustTrans$ rules
\ENDWHILE
\STATE{Apply $\crule_T$ and $\crule'_T$} 
\IF{We have $\bot$}
	\STATE{We do not have a model. Exit!}
\ENDIF
\WHILE{We can apply $\trule_2$ rule}
	\STATE Apply $\trule_2$ rule
\ENDWHILE
\WHILE{We can apply $\disc_1$, $\disc_2$ rule}
	\STATE Apply $\disc_1$, $\disc_2$ rule
\ENDWHILE
\STATE{Apply $\crule_C$} 
\IF{We have $\bot$}
	\STATE{We do not have a model. Exit!}
\ENDIF
\COMMENT{Solved the discordancies of simple evidences.}
\WHILE{We can apply $\trule_1$ rule}
	\STATE Apply $\trule_1$ rule
\ENDWHILE
\WHILE{We can apply $(\rightarrow)$ rule}
	\STATE Apply $(\rightarrow)$ rule
\ENDWHILE
\WHILE{We can apply $\disc'_1$, $\disc'_2$ rule}
	\STATE Apply $\disc'_1$, $\disc'_2$ rule
\ENDWHILE
\WHILE{We can apply $(\rightarrow')$ rule}
	\STATE Apply $(\rightarrow')$ rule
\ENDWHILE
\WHILE{We can apply $\disc''_1$, $\disc''_2$ rule}
	\STATE Apply $\disc''_1$, $\disc''_2$ rule
\ENDWHILE
\STATE{Apply $\crule'_C$} 
\IF{We have $\bot$}
	\STATE{We do not have a model. Exit!}
\ENDIF
\WHILE{We can apply $\trule'_1$ rule}
	\STATE Apply $\trule'_1$ rule
\ENDWHILE
\STATE{Apply $\crule_P$} 
\IF{We have $\bot$}
	\STATE{We do not have a model. Exit!}
\ENDIF
\end{algorithmic}
\end{algorithm}

Given a $\Hist$-theory, the procedure starts by generating all the trust
relations, applying ($\trustTrans$) and ($\RtrustTrans$). 
Any contradiction that exists between trust relations is immediately captured by $\crule_T$ and $\crule'_T$. 
$\trule_2$ is applied to transform any derived evidence into its interpretations.
If needed, $\disc_1$ and $\disc_2$ are applied.
At this point all possible simple pieces of evidence are generated.
Any contradiction between first layer formulas is captured by $\crule_C$.
Afterwards, $\trule_1$ transforms any simple evidence into second layer formulas that are used by ($\rightarrow$) to obtain
reasoning layer formulas. 
$\disc'_1$ and $\disc'_2$ are applied to solve discordances between reasoning layer formulas based on the reasonings' trust relations. 
The result of the previous rules is used by ($\rightarrow'$) to generate reasoning layer formulas from derived pieces of evidence of the second type.
If any discordance arises, it is solved by $\disc''_1$ and $\disc''_2$,
where rule $\disc''_2$ not only takes out the not preferred evidence, but also any derived evidence that uses it as a precondition.
If no contradiction between reasoning rules is captured by $\crule'_C$, then $\trule'_1$ transforms all reasoning layer formulas into interpretation layer ones.
If $\crule_P$ applies, then there is a contradiction and we have $\bot$,
else no further transformation can be done, and the resulting set of formulas is the model of $\Hist$-theory.

\begin{example}
By applying the procedure we find that $(\rightarrow)$ can be applied only to $\CrowdStrike$'s pieces of evidence as the derived ones of $\TheForensicator$ are missing their premises, removed by $\disc_2$. 
Applying $\trule'_1$ yields $t_1:\mathit{Attack}$ and the analyst concludes that the attack occurred during March/April 2016. \hfill$\Box$
\end{example}

\begin{theorem}\label{theorem:correct}
The order of the rules in Algorithm~\ref{algo2:order} used by the rewriting procedure is correct.
\end{theorem}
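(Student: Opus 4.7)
The plan is to establish two properties: that Algorithm~\ref{algo2:order} terminates on every $\Hist$-theory $\Theory$, and that the resulting theory is a model of $\Theory$ whenever one exists, while the algorithm returns $\bot$ exactly when $\Theory$ is unsatisfiable. Since soundness of the underlying rules is already provided by Theorem~\ref{theosoundness}, the core of the argument lies in justifying the specific ordering: in particular, that no rule is applicable after its designated loop has exited, and that premises of later rules are never contaminated by formulas that will subsequently be deleted.

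For termination I would go loop by loop. The $\trustTrans/\RtrustTrans$ loops stabilize because they monotonically enlarge a relation over the finite domains $\Agt \times \Agt \times \Vars_S$ and $\mathcal{R}\times \mathcal{R}$. The $\trule_2$ loop fires at most once per derived evidence, and since the original theory is finite and $\trule_2$ does not introduce new derived evidence, it halts. Each discordance loop interleaves a $\disc_i^{(\prime)}$ step (bounded by the number of witness pairs at distinct instants, which is finite in a finite $\Theory$) with a strictly deleting $\disc_2^{(\prime)}$ step; no inserted formula can retrigger a strictly earlier stage. The remaining loops ($\trule_1$, $(\rightarrow)$, $(\rightarrow')$, $\trule'_1$) are one-shot over the finitely many matching premises present when they are entered.

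For correctness of the ordering I would prove a layer-by-layer invariant: on exit from each block, no rule scheduled in that block can fire, and every rule scheduled later has only premises that will not be deleted. Concretely: steps 1--3 compute the full transitive closure of the trust relations, so every subsequent application of $\disc_2, \disc'_2, \disc''_2$ sees maximal trust chains; $\crule_T, \crule'_T$ detect irreflexivity violations immediately, as required by $\mathit{COND}_3, \mathit{COND}_4$. Step 7--9 unpacks every derived evidence into its constituent simple pieces plus the interpretation formula needed by $(\rightarrow)$ and $(\rightarrow')$. Steps 10--16 resolve all simple-evidence discordances before any $\trule_1$ application, which is essential because $\trule_1$ erases the agent label on which $\disc_2$ depends. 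Steps 17--25 iterate the same pattern one layer up for first-type reasoning, and steps 26--32 handle second-type reasoning, so that $(\rightarrow')$ only consumes stabilized first-type formulas. Finally, $\trule'_1$ and $\crule_P$ are postponed until after all trust-based resolutions, guaranteeing that a contradiction at the interpretation layer is a genuine one.

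The main obstacle will be the cascading behaviour of $\disc''_2$, whose side condition removes the set $\Delta$ of every derived evidence whose reasoning list contains the discarded $r_2$. I would handle this by induction on the nesting depth of the reasoning subscripts: any second-type reasoning formula that transitively depends on $r_2$ either has $r_2$ in its own subscript list (and hence lies in $\Delta$) or was produced by an earlier $(\rightarrow')$ application whose premise has already been removed at the preceding layer. Combined with the invariants above, this shows that after step 32 no surviving reasoning formula refers to a discarded piece of evidence, so the subsequent $\trule'_1$ and $\crule_P$ operate on a coherent state. Completeness then follows: if $\crule_P$ fires, soundness of the rules implies every model of $\Theory$ would contain both $t:\phi$ and $t:\neg\phi$, so none exists; otherwise the surviving interpretation-layer formulas, together with any trust relations, describe a model witnessing satisfiability of $\Theory$.
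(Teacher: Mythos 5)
Your plan is sound and reaches the same conclusion as the paper, but it is organized quite differently. The paper factors the argument into two pieces: a lemma (Lemma~\ref{lemma:contradiction}) that characterizes, by case analysis on the closure rules, exactly which \emph{wrong} orderings of the rewriting rules can turn a satisfiable theory into a contradictory one (e.g., applying $\trule_1$ before $\disc_1,\disc_2$, or $\trustTrans$ after $\disc_1,\disc_2$), and then a proof of Theorem~\ref{theorem:correct} by induction on the construction of the theory (base cases for relational formulas, simple and derived evidence; inductive step on $n$ formulas of each kind and on blends), checking in each case that the algorithm's order avoids all the bad patterns and hence yields an empty theory iff $\Theory$ is unsatisfiable. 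You instead argue forward over the algorithm's blocks, establishing a per-block invariant (no rule of a block remains applicable on exit, and later rules never consume formulas that will be deleted), and you add an explicit loop-by-loop termination argument that the paper only gestures at by asserting the rewriting is exhausted. Your inline justification that $\trule_1$ erases the agent label needed by $\disc_2$, and your induction on the nesting depth of reasoning subscripts to control the cascading deletions of $\disc''_2$, are essentially the positive restatement of the paper's failure-mode lemma, worked out at the point of use rather than isolated as a separate lemma. What the paper's decomposition buys is a clean separation between ``which orders are dangerous'' and ``the algorithm uses a safe order''; what yours buys is an explicit termination proof and a more operational account of why each phase of Algorithm~\ref{algo2:order} is placed where it is. Both routes lean on Theorem~\ref{theosoundness} (equivalently Lemma~\ref{lemma:open}) for the direction that an open exhausted theory has a model, so neither is doing more than the other on that front.
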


\section{A Detailed Case Study: Attribution of a Cyber-Attack}
\label{sec:example}

The Evidence Logic $\Hist$ can be used in diverse application areas where there is a need to analyze and reason about conflicting data/knowledge. 
In this section, as a concrete proof of concept to show how to apply $\Hist$ during the investigations on a cyber-attack, we discuss a cyber-forensics case study in which the analyst needs to collect various pieces of evidence and analyze them to decide who performed the attack; this process
is called \emph{attribution} of the attack to a particular entity. 

As we remarked above, forensics investigations typically produce an enormous amount of evidence that need to be analyzed. The pieces of evidence are produced/collected by various sources, which can be humans (e.g., another analyst) or forensic tools such as intrusion detection system (IDS), traceback systems, malware analysis tools, and so on. 
The analyst trusts more some sources than others for particular pieces of evidence, e.g., source $S_1$ is more trusted than source $S_5$ for
attack similarity as tool $S_1$ specializes in malware analysis whereas tool $S_5$ specializes in deleted data. 
The collected evidence  
can be conflicting or bring  to conflicting results. The $\Hist$ Logic represents the evidence, together with its sources and relations of trust, and reasons about it, by eliminating the conflicting evidence and helping the analyst during the analysis process.

Suppose the analyst has collected (from analysts $A_1, \ldots, A_4$ and sources $S_1$, $S_2$, $S_3$) and is analyzing, using $\Hist$, the following pieces of evidence, representing events related to the attack that occurred (for the sake of space, we give a simplified but realistic version of the evidence that can be easily extended).

\begin{displaymath}
\footnotesize
\begin{array}{ll}
A_1: t:& \mathit{Culprit}(C, \mathit{Attack})[S_1: t: \mathit{sIP(Attack, IP)} \mid S_1: t: \mathit{Geoloc(IP, C)} \mid  S_2: t: \\
& \mathit{Cap(C, Attack)}]_{r_1}\\
A_2: t:& \mathit{Culprit(C, Attack)}[S_2: t: \mathit{Motive(C, Attack)} \mid S_2: t: \mathit{Cap(C, Attack)}]_{r_2}\\
A_3: t:&  \neg \mathit{Culprit(C, Attack)}[S_3: t: \neg \mathit{Cap(C, Attack)} \mid S_4: t: \neg \mathit{Fin(C, Attack)}]_{r_3}\\
A_4: t:& \neg \mathit{Culprit(C, Attack)}[S_1: t: \mathit{sIP(Attack, IP)} \mid S_1: t: \mathit{Geoloc(IP, C)} \mid S_7: t: \\
& \mathit{Spoofed(IP)}]_{r_4}\\
S_1: t:& \mathit{sIP(Attack, IP)}\\
S_1: t:& \mathit{Geoloc(IP, C)}\\
S_2: t:& \mathit{Cap(C, Attack)}[S_6: t_1: \mathit{Admit(C, Attack')} \mid S_1: t: \mathit{Sim(Attack, Attack')}]_{r_5}\\
S_3: t:& \neg \mathit{Cap(C, Attack)}[S_6: t_1: \mathit{Admit(C, Attack')} \mid S_5: t: \neg \mathit{Sim(Attack, Attack')}]_{r_6}\\
S_2: t:& \mathit{Motive(C, Attack)}[S_5: t: \mathit{EConf(C, Victim)}]_{r_7}\\
& 
S_5 \trust_{\mathit{Sim}} S_1\qquad r_1 \prec r_4 \qquad 
r_4 \prec r_2 \qquad 
r_2 \prec r_3 
\end{array}
\end{displaymath}
$\mathit{sIP(Attack, IP)}$ means that the $\mathit{Attack}$ came from $\mathit{IP}$; $\mathit{Geoloc(IP, C)}$ that $\mathit{IP}$ has country $C$ as geographical location; $\mathit{Cap(C, Attack)}$  that country $C$ has the capability of conducting the $\mathit{Attack}$. 
Analyst $A_1$ states that (based on reasoning $r_1$), given country $C$ 
is capable of performing the $\mathit{Attack}$ (stated by $S_2$) and it came from $\mathit{IP}$ located in $C$ (stated by $S_1$), then $C$ performed (is the culprit of) the attack, i.e., $\mathit{Culprit(C, Attack)}$. 
$A_2$ states that $C$ is the culprit (based on $r_2$),
as it has the capability of and the motive $\mathit{Motive(C, Attack)}$ for performing it (both stated by $S_2$). 
$A_3$ states that $C$ is not the culprit (based on $r_3$), as it is not capable of 
and (as stated by $S_4$) does not have the financial resources $\mathit{Fin(C, Attack)}$ for commissioning the attack.
$A_4$ states that $C$ is not the culprit (based on $r_4$), 
as the $\mathit{IP}$'s are $\mathit{Spoofed}$ (stated by $S_7$), so
their geolocation cannot be used.
Source $S_1$ states that the $\mathit{IP}$ from which the attack originated is located in $C$.
$S_2$ states that $C$ is capable (based on $r_5$), as $C$  admitted to be the culprit of a previous attack, i.e., $\mathit{Admit(C,Attack')}$, at $t_1$ (stated by $S_6$), and the latter is similar ($\mathit{Sim}$) to $\mathit{Attack}$ (stated by $S_1$). 
$S_3$ states that $C$ is not capable of performing  $\mathit{Attack}$
(based on $r_6$), as $\mathit{Attack'}$ that $C$ admitted to have performed, is not similar to $\mathit{Attack}$ (stated by $S_5$). 
$S_2$ states that $C$ has motive for the attack (based on $r_7$), as
$C$ has an economical conflict $\mathit{EConf}$ with the attack $\mathit{Victim}$ (stated by $S_5$).
Our analyst trusts more source $S_1$ than $S_5$ for the similarity between attacks, and reasoning $r_3$ more than $r_2$, $r_2$ more than $r_4$ and $r_4$ more than $r_1$.

The simple pieces of evidence of this use case are:
\begin{displaymath}
\footnotesize
\begin{array}{ll}
\Vars_S = & \{\mathit{sIP(Attack, IP)}, \mathit{Geoloc(IP, C)}, \mathit{Fin(C, Attack)}, \mathit{Admit(C, Attack')},\\ &
 \ \mathit{Sim(Attack, Attack')}, \mathit{Spoofed(IP)}, \mathit{EConf(C, Victim)}\}.
\end{array}
\end{displaymath}

\begin{figure}[t!]
\centering
\tiny
\begin{equation}\label{eq1}
\infer[\trule_2]{
\begin{array}{c}
\Theory \cup\{S_1:t: sIP(Attack, IP), S_1:t: Geoloc(IP, C)\} \cup \\ \{\ t:sIP(Attack, IP) \wedge t:Geoloc(IP, C) \wedge t:Cap(C, Attack)\rightarrow_{r_1} t:Culprit(C, Attack) \}
\end{array}
}{
A_1:t:Culprit(C, Attack)[S_1:t:sIP(Attack, IP) \mid S_1:t: Geoloc(IP, C) \mid S_2:t: Cap(C, Attack)]_{r_1}}
\end{equation}
\begin{equation}
\infer[\trule_2]{
\Theory \cup \{\ 
t: Motive(C, Attack) \wedge t: Cap(C, Attack)\rightarrow_{r_2} t: Culprit(C, Attack)\}
}{
A_2:t: Culprit(C, Attack)[S_2:t: Motive(C, Attack) \mid S_2:t: Cap(C, Attack)]_{r_2}}
\end{equation}
\begin{equation}
\infer[\trule_2]{
\begin{array}{c}
\Theory \cup\{S_4:t: \neg Fin(C, Attack)\} \cup \\
\{t: \neg Cap(C, Attack) \wedge t: \neg Fin(C, Attack)\rightarrow_{r_3} t: \neg Culprit(C, Attack)\}
\end{array}
}{
A_3:t: \neg Culprit(C, Attack)[S_3:t: \neg Cap(C, Attack) \mid S_4:t: \neg Fin(C, Attack)]_{r_3}}
\end{equation}
\begin{equation}
\infer[\trule_2]{
\begin{array}{c}
\Theory \cup\{S_6: t_1: \mathit{Admit}(C, Attack'), S_1: t: Sim(Attack, Attack')\} \cup \\ \{\ t_1: Admit(C, Attack') \wedge t: Sim(Attack, Attack') \rightarrow_{r_5} t: Cap(C, Attack)\}
\end{array}
}{
S_2: t: Cap(C, Attack)[S_6: t_1: \mathit{Admit}(C, Attack') \mid S_1: t: Sim(Attack, Attack')]_{r_5}}
\end{equation}
\begin{equation}
\infer[\trule_2]{
\begin{array}{c}
\Theory \cup\{S_6: t_1: Admit(C, Attack'), S_5: t: \neg Sim(Attack, Attack')\} \cup \\\{\ t_1: \mathit{Admit}(C, Attack') \wedge t: \neg Sim(Attack, Attack') \rightarrow_{r_6} t: \neg Cap(C, Attack)\}
\end{array}
}{
S_3: t: \neg Cap(C, Attack)[S_6: t_1: Admit(C, Attack') \mid S_5: t: \neg Sim(Attack, Attack')]_{r_6}}
\end{equation}
\begin{equation}
\infer[\trule_2]{
\begin{array}{c}
\Theory \cup\{S_1:t: sIP(Attack, IP), S_1:t: Geoloc(IP, C), S_7:t: Spoofed(IP)\} \cup \\ \{\ t:sIP(Attack, IP) \wedge t:Geoloc(IP, C) \wedge t: Spoofed(IP)\rightarrow_{r_4} t:\neg Culprit(C, Attack) \}
\end{array}
}{
A_4:t:\neg Culprit(C, Attack)[S_1:t:sIP(Attack, IP) \mid S_1:t: Geoloc(IP, C) \mid S_7:t: Spoofed(IP)]_{r_4}}
\end{equation}
\begin{equation}\label{eq7}
\infer[\trule_2]{
\Theory \cup\{S_5: t:EConf(C, Victim)\} \cup \{\ t: EConf(C, Victim) \rightarrow_{r_7} t: Motive(C, Attack)\}
}{
S_2: t: Motive(C, Attack)[S_5:t:EConf(C, Victim)]_{r_7}}
\end{equation}
\begin{equation}\label{eqd}
\infer[\disc_2]{
\Theory \setminus \{S_5: t: \neg Sim(Attack, Attack')\}
}{
S_5 \trust_{Sim} S_1 & S_1: t: Sim(Attack, Attack') & S_5: t: \neg Sim(Attack, Attack')
}
\end{equation}
\begin{equation}\label{eq8}
\infer[(\rightarrow)]{
\Theory \cup \{(t: Cap(C, Attack))_{r_5}\}
}{\begin{array}{c}
t_1:\mathit{Admit}(C, Attack')  \qquad  t: Sim(Attack, Attack') \\
t_1:\mathit{Admit}(C, Attack') \wedge t: Sim(Attack, Attack') \rightarrow_{r_5} Cap(C, Attack)
\end{array}
}
\end{equation}
\begin{equation}\label{eq10}
\infer[(\rightarrow)]{
\Theory \cup\{(t:\neg Culprit(C, Attack))_{r_4}\}
}{
\begin{array}{c}
t: sIP(Attack, IP) \qquad t: Geoloc(IP, C) \qquad t: Spoofed(IP)\\
t:sIP(Attack, IP) \wedge t:Geoloc(IP, C) \wedge t: Spoofed(IP)\rightarrow_{r_4} t:\neg Culprit(C, Attack)  \end{array}
}
\end{equation}
\begin{equation}\label{eq11}
\infer[(\rightarrow)]{
\Theory \cup\{ (t: Motive(C, Attack))_{r_7}\}
}{
t: EConf(C, Victim) \rightarrow_{r_7} t: Motive(C, Attack) &
 t:EConf(C, Victim)}
\end{equation}
\begin{equation}\label{eq12}
\infer[(\rightarrow')]{
\Theory \cup \{(t: Culprit(C, Attack))_{r_1,r_5}\}
}{
\begin{array}{c}
t:sIP(Attack, IP) \qquad t: Geoloc(IP, C) \qquad (t:Cap(C, Attack))_{r_5}\\
t:sIP(Attack, IP) \wedge t: Geoloc(IP, C) \wedge t:Cap(C, Attack) \rightarrow_{r_1} t:Culprit(C, Attack) \end{array}
}
\end{equation}
\begin{equation}\label{eq13}
\infer[(\rightarrow')]{
\Theory \cup \{(t: Culprit(C, Attack))_{r_2,r_7,r_5}\}
}{
\begin{array}{c}
(t:Motive(C,Attack))_{r_7} \qquad (t:Cap(C, Attack))_{r_5}\\
t:Motive(C, Attack) \wedge t:Cap(C, Attack) \rightarrow_{r_2} t:Culprit(C, Attack)
\end{array}
}
\end{equation}
\begin{equation}\label{eq14}
\infer[\disc''_2 { } ]{
\Theory \setminus \{(t:Culprit(C, Attack))_{r_1, r_5}\}
}{
r_1 \prec r_4 & (t: Culprit(C, Attack))_{r_1,r_5} & (t:\neg Culprit(C, Attack))_{r_4}
}
\end{equation}
\begin{equation}\label{eq15}
\infer[\disc''_2 { } ]{
\Theory \setminus \{(t:\neg Culprit(C, Attack))_{r_4}\}
}{
r_4 \prec r_2 & (t: Culprit(C, Attack))_{r_2,r_7,r_5} & (t:\neg Culprit(C, Attack))_{r_4}
}
\end{equation}
\caption{Application of the Rewriting Procedure}\label{fig:L2rule}
\end{figure}

Let us now apply $\Hist$'s rewriting procedure. 
We start with rules $\trustTrans$ and $\RtrustTrans$: the first cannot be applied, the second yields $r_1 \prec r_2$, $r_1 \prec r_3$ and $r_4 \prec r_3$.
Neither $\crule_T$ nor $\crule'_T$ can be applied. 
We show the application of $\trule_2$ to the pieces of evidence 
 in (\ref{eq1})--(\ref{eq7}) in Fig.~\ref{fig:L2rule}.
In~(\ref{eqd}) rule $\disc_2$ eliminates $S_5: t: \neg \mathit{Sim(Attack, Attack')}$.
No contradiction is captured by $\crule_C$, and
$\trule_1$ transforms all first layer formulas into second layer ones:
\begin{displaymath} 
\footnotesize
\begin{array}{ll}
\Theory \cup & \{t: \mathit{sIP(Attack, IP)}, t: \mathit{Geoloc(IP, C)}, t: \neg \mathit{Fin(C, Attack)}, t: \mathit{Spoofed(IP)}\\
& \  
t_1: \mathit{Admit(C, Attack')}, t: \mathit{Sim(Attack, Attack')}, t: \mathit{EConf(C, Victim)}\}.
\end{array}
\end{displaymath}
(\ref{eq8})--(\ref{eq11}) show applications of $(\rightarrow)$ to any evidence that has its premises in the theory. 
$\disc'_1$ and $\disc'_2$ cannot be applied as there is no temporal/factual discordance between derived pieces of evidence of the first type. 
Applying $(\rightarrow')$ produces derived pieces of evidence of the second type for $A_1$ and $A_2$ as shown in (\ref{eq12})--(\ref{eq13}).
$A_3$'s evidence is not derived as $C$ is capable to perform the attack.
Rule $\disc''_1$ cannot be applied. 
Rule $\disc''_2$ is applied, as shown in (\ref{eq14})-(\ref{eq15}), to
the conflicting pieces of evidence where the reasonings' trust relations apply.
Finally, $\trule'_1$ transforms all third layer formulas into second layer ones:
\begin{displaymath} 
\footnotesize
\begin{array}{ll}
\Theory \cup & \{t:sIP(Attack, IP), t:Geoloc(IP, C), t: \neg Fin(C, Attack),t: Spoofed(IP)\\
& \  
t_1: \mathit{Admit(C, Attack')},
  t: \mathit{Sim(Attack, Attack')},  t: \mathit{Cap(C, Attack)},\\
  & \ t: \mathit{EconfConflict(C, Victim)}, t: \mathit{Motive(C, Attack)}, t: \mathit{Culprit(C,Attack)}\}.
\end{array}
\end{displaymath}
The analyst, given the result of the procedure, concludes that the culprit of the $\mathit{Attack}$ is $C$.

The question of ``who performed the attack'' is, in general, not an easy one to answer, but we believe that $\Hist$ can be successfully used to analyze and filter the large amount of cyber-forensics evidence that an analyst needs to deal with. At the very least, $\Hist$ allows an analyst to perform a first, formal filtering of the evidence and obtain different plausible conclusions, which the analyst can then further investigate.

\section{Related Work and Concluding Remarks}
\label{sec:related}

When we introduced $\Hist$ and discussed how it allows analysts to reason about simple and derived evidence given by different sources, we deliberately did not use the notion of ``belief''.
We chose to do so as the main scope of our work is not to consider modalities of knowledge or belief, but to introduce a procedure that analyzes and filters the potentially enormous amount of forensics evidence, eliminate discordances and reach conclusions. 
The notion of evidence (both simple and derived evidence) can be represented quite naturally as agents' beliefs and, in fact, the reasoning process in $\Hist$ could be considered a belief revision process. However, our procedure, differently from the belief revision process, uses a monotonic reasoning, does not distinguish between beliefs and knowledge, is based on the notion of trust,
and does not apply the principle of minimal change.

\emph{Belief revision} is the process of integrating 
new information with existing beliefs or knowledge
\cite{Benthem07,Van07,BaltagS06a,AugustoS01,Dix2016}.
It is performed based on the knowledge and beliefs
of the user and the beliefs of other agents announced, privately~\cite{AgotnesBDS10,BalbianiGHL11} or 
publicly~\cite{Plaza07,Balbiani07a}, and it uses non-monotonic reasoning.
In our approach, we use \emph{monotonic reasoning} as we expect 
only the final set, that represents our theory, to be consistent.
Our procedure deals with conflicting pieces of evidence,
which are analyzed by expanding or contracting 
the evidence set. 
In case of unsolved inconsistencies,
our theory is empty. 
The procedure does not
incorporate every incoming information 
in the evidence set, but rather the new evidence is included or
not depending on the trust relations. 
This is different from the classical AGM belief revision~\cite{AGM}, where 
the \emph{principle of minimal belief change} applies. 

Our analysis can be seen as a revision procedure, where we do not distinguish between
beliefs and knowledge. 
 Thus, all the pieces of evidence can be treated as beliefs, and
there is no space for personal or common belief/knowledge. 
Some works have considered belief revision that uses relation of trust between agents 
\cite{LoriniJP14,HunterB15,BarberK00,Alechina08}. 
However, not much effort has been devoted to working with a relation of trust relative to the reasoning used 
to arrive to certain conclusions.  
Our trust relations do not have a grading system, like the one in~\cite{LoriniJP14},
which is difficult to define 
for cyber-forensics data, but use comparable trust between the sources based on the evidence, similar to~\cite{HunterB15}, where a notion of trust restricted to a domain of expertise is used.
As future work, we plan to use Bayesian belief networks~\cite{BarberK00},
and the Dempster-Shafer theory to
quantify the level of trust for the evidence, and
to enrich our framework with trust reinforcement mechanisms.

To the best of our knowledge, the only attempt at using belief revision
during  
cyber-attacks' investigations is~\cite{ShakarianSMP15,ShakarianSMPPFA16}, where a \emph{probabilistic structure argumentation} framework is used
to analyze contradictory and uncertain data.
Our procedure does not deal with probabilities, but with preferences between sources and reasoning rules. We believe this to be a more accommodating approach, especially for the main use case, investigations of cyber-attacks, where calculating and revising 
 probabilities is resource consuming. 
  The framework of~\cite{ShakarianSMP15,ShakarianSMPPFA16} allows attackers to use, during the deceptive attempts,
the well-known \emph{specificity criteria}, 
 i.e., the rule that uses more pieces of evidence 
 is preferable.  
We avoid this type of deceptive attempts as the trust relations are given by the analyst.

$\Hist$ is based on LTL. 
Another approach is to use 
\emph{Temporal Defeasible Logic}~\cite{AugustoS01}, where knowledge is represented as norms with temporal scope~\cite{GovernatoriT07}.
For the sake of simplicity, our stream of time is discrete and provided initially. 
As future work, we plan to consider the flow of time as not provided and as non-discrete in order to
have temporal relations between labels that represent the instants of time.

Another distinctive feature of our approach with respect to the rest of the literature that focuses on agents' trust relations and their reputation systems is the fact that we engage not only with the trust between agents,  but also with the reasoning behind the evidence.
Hence, even when a particular agent is not trusted, 
if the reasoning behind the evidence is sound, we might take it into account. The notion of trust, also seen as preference, is subjective to the analyst, and we assume that agents are sincere, and thus 
share all their information. 
As future work, we plan to incorporate both a reputation revision process, where the trustworthiness and reliability of the sources is analyzed and revised based on past experience, and private/public announcements. Finally, on the theoretical side, we plan to investigate the completeness of the rewriting system and algorithm, whereas on the practical side, we plan to fully automate our analysis process and to perform an evaluation analysis on real evidence of cyber-attacks.

\section*{Acknowledgments}
Erisa Karafili was supported by the European Union's H2020 research and innovation programme under the Marie Sk\l{}odowska-Curie grant agreement No.~746667.

\bibliographystyle{splncs04}
\bibliography{hist}

\appendix
\section{Appendix: Soundness of the Rewriting System and Correctness of the Algorithm}
\label{sec:sound}

In this appendix, we prove the soundness of the rewriting system of $\Hist$ and the correctness of Algorithm~\ref{algo2:order}.
Given a theory $\Theory$ and $\Hist$'s  rewriting
system, the application of at least one of its closure rules 
generates an empty set. In fact, every theory that contains $\bot$ is 
equivalent to the empty theory. 
When the input theory is not empty and has \emph{no contradiction}, 
then the theory rewritten by $\Hist$ should give as result 
a non empty theory.

As usual in \emph{tableau rewriting systems}, we define three fundamental notions: open, closed, and exhausted theories. A theory is \emph{closed} when it contains a contradiction and it is \emph{open} when it does not. A theory is \emph{exhausted} when it is a fixpoint with respect to the rewriting process, i.e., by applying the rewriting system to an exhausted theory $\Theory$, we always obtain $\Theory$. Under the grounded semantics introduced in Section~\ref{sec:semantics}, 
we prove the soundness of the rewriting system by showing that 
open theories have models under the semantics, and closed ones have not. Thus, when we find an open and exhausted theory, we can prove the existence of a model.

We show now that the rules that rewrite a theory $\Theory$ into $\widehat{\Theory}$ without introducing $\bot$ constitute by themselves a sound system. The proofs of Lemma~\ref{verybasiclemma} and Lemma~\ref{basiclemma} are straightforward and are omitted for the sake of space.

\begin{lemma}\label{verybasiclemma}
	If a satisfiable $\Hist$-theory $\Theory$ is rewritten into an exhausted theory $\widehat{\Theory}$,
	without using the closure rules, 
	then $\widehat{\Theory}$ entails consequence $C$ only when $C$ is a
	consequence of $\Theory$.
	\end{lemma}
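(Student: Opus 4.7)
I would proceed by induction on the length $n$ of the rewriting sequence $\Theory = \Theory_0 \to \Theory_1 \to \cdots \to \Theory_n = \widehat{\Theory}$, reducing the claim to a single-step model-preservation statement: whenever $\Theory_i \to \Theory_{i+1}$ is a non-closure step, every model of $\Theory_i$ is also a model of $\Theory_{i+1}$. Composing along the chain then gives $\mathrm{Mod}(\Theory) \subseteq \mathrm{Mod}(\widehat{\Theory})$, equivalently $\mathrm{Cn}(\widehat{\Theory}) \subseteq \mathrm{Cn}(\Theory)$, which is exactly what the lemma asserts. The base case $n = 0$ is immediate, and the satisfiability hypothesis guarantees that $\mathrm{Mod}(\Theory)$ is nonempty so the statement is not vacuous.

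The inductive step splits into a case analysis over the twelve non-closure rules. The elimination rules $\disc_2$, $\disc'_2$, $\disc''_2$ are handled by pure monotonicity: removing formulas can only enlarge the class of models. The transitivity rules $\trustTrans$ and $\RtrustTrans$ follow directly from Definition~\ref{semantic}, where $\trust_p^\Interp$ and $\prec^\Interp$ are defined as transitive closures, so any model of the two premises already satisfies the inserted conclusion. The temporal-discordance rules $\disc_1$, $\disc'_1$, $\disc''_1$ are justified by the semantic constraints $\mathit{COND}_1$ and $\mathit{COND}_2$: once an agent (resp.\ reasoning) asserts $\phi$ at some $t$, those constraints force $a^\Interp(t',\phi) = \mathit{False}$ (resp.\ $(t',\phi)_{r^\Interp} = \mathit{False}$) for every other $t'$, which by the stipulated interpretation of these functions means that $\neg\phi$ is asserted at $t'$, so the inserted formulas are true in every model of the premises.

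The remaining work lies in the layer-crossing insertion rules $\trule_1$, $\trule'_1$, $\trule_2$, $\secrule$, and $(\rightarrow')$. For $\trule_1$ I would argue that a model satisfying $\ddoc{a}{t}{\phi}$ is, by the coherence between $\Agt^\Interp$ and $\hfunc^\Interp$ built into the semantics, one in which $\hfunc^\Interp(t,\phi) = \mathit{True}$, so the inserted interpretation $t:\phi$ is automatically true; $\trule'_1$ is the analogous argument with $\mathcal{R}^\Interp$ in place of $\Agt^\Interp$. For $\trule_2$, the inserted implication $t_1:\phi_1 \wedge \cdots \wedge t_n:\phi_n \rightarrow_r t:\phi$ is a direct $\Hist_I$-rendering of the definition of derived evidence from Section~\ref{subsec:hist-doc}, while the simple evidences $\ddoc{a_i}{t_i}{\phi_i}$ are, by the non-cyclicity constraint on reasonings and the very meaning of derived evidence, already forced to hold in any model of the premise. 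Finally $\secrule$ and $(\rightarrow')$ are plain modus ponens in the interpretation and reasoning layers, so their conclusions are semantically forced by their premises.

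The main obstacle will be pinning down the semantic link between $\Agt^\Interp$ and $\mathcal{R}^\Interp$ on one side and $\hfunc^\Interp$ on the other precisely enough to validate the layer-crossing rules without covertly assuming what is to be proved. I expect the cleanest way to close this gap is to fix, once and for all, a canonical coherence requirement on $\Hist$-models, stating that $\hfunc^\Interp(t,\phi) = \mathit{True}$ whenever the first- or third-layer interpretation of some formula in $\Theory$ places $\phi$ at $t$; once such a coherence clause is in force, each of the twelve cases above collapses to a short semantic verification that feeds directly into the induction.
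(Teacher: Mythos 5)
The paper never actually writes out a proof of this lemma: it declares the proofs of Lemma~\ref{verybasiclemma} and Lemma~\ref{basiclemma} ``straightforward'' and omits them, so there is no official argument to compare yours against. Your overall strategy --- induction on the length of the rewriting sequence, reducing everything to per-rule model preservation and reading off $\mathrm{Cn}(\widehat{\Theory})\subseteq\mathrm{Cn}(\Theory)$ from $\mathrm{Mod}(\Theory)\subseteq\mathrm{Mod}(\widehat{\Theory})$ --- is the standard tableau-soundness argument and is surely what the authors intend. Your treatment of the elimination rules (monotonicity), the transitivity rules (the semantics already takes transitive closures), and the $\disc_1$-family (via $\mathit{COND}_1$ and $\mathit{COND}_2$) is correct, modulo the paper's informal identification of $a^\Interp(t,p)=\mathit{False}$ with the truth of $\ddoc{a}{t}{\neg p}$.

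The genuine gap is the one you flag yourself, and your proposed repair does not close it. To validate $\trule_1$ and $\trule'_1$ you postulate a coherence clause forcing $\hfunc^\Interp(t,\phi)=\mathit{True}$ whenever some agent or reasoning places $\phi$ at $t$. Under that clause, any theory containing a factual discordance --- $\ddoc{a_1}{t}{\phi}$ together with $\ddoc{a_2}{t}{\neg\phi}$ --- has no model at all, since $\hfunc^\Interp$ would have to make both $t:\phi$ and $t:\neg\phi$ true, which $\crule_P$ treats as contradictory. These are exactly the theories that $\disc_2$ is designed to resolve, and Lemma~\ref{lemma:open} (hence Theorem~\ref{theosoundness}) needs them to remain satisfiable when the discordance is trust-resolvable; your clause would make the present lemma vacuous for them and break the soundness argument downstream. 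Conversely, without some such clause the lemma as literally stated is itself in doubt: applying $\trule_1$ to both sides of a trust-resolvable discordance before $\disc_2$ fires is a closure-free rewriting that puts $t:\phi$ and $t:\neg\phi$ into $\widehat{\Theory}$, which then entails consequences that the satisfiable $\Theory$ does not. An honest proof must therefore either restrict attention to rewritings respecting the order of Algorithm~\ref{algo2:order} and argue globally that only trust-surviving evidence is ever promoted by $\trule_1$ and $\trule'_1$, or formulate a coherence condition that is conditional on the evidence not being defeated under $\trust$ and $\prec$; neither reduces to the ``short semantic verification'' you anticipate. (A trivial slip: there are thirteen non-closure rules in Table~\ref{table:rules}, not twelve.)
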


\begin{lemma}\label{basiclemma}
	If an unsatisfiable $\Hist$-theory $\Theory$ is rewritten into an exhausted theory $\widehat{\Theory}$,
	then $\widehat{\Theory}$ is empty. 
\end{lemma}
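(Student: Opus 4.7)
The approach is to prove the contrapositive: if the exhausted theory $\widehat{\Theory}$ is non-empty, then the original $\Theory$ must already have been satisfiable. Since $\widehat{\Theory}$ is non-empty, by the paper's convention no closure rule has fired (otherwise we would have obtained $\bot$, which collapses to the empty set). Combined with Lemma~\ref{verybasiclemma}, which tells us that exhaustive rewriting preserves entailment, it suffices to build a model directly from $\widehat{\Theory}$ itself and invoke that model as a witness of satisfiability for $\Theory$.

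The plan for the model construction is as follows. Interpret each temporal label as a distinct natural number in the obvious way. For each $a \in \Agt$, define ${a}^\Interp(t,p) = \mathit{True}$ precisely when $\ddoc{a}{t}{p} \in \widehat{\Theory}$, together with the dual clauses for $\neg p$; do the analogous thing for the reasoning functions on the reasoning-layer formulas. Define $\hfunc^\Interp(t,\phi) = \mathit{True}$ precisely when $t:\phi \in \widehat{\Theory}$. Finally, take ${\trust_p}^\Interp$ and ${\prec}^\Interp$ to be the relations read directly off the trust formulas of $\widehat{\Theory}$, which by exhaustion of $\trustTrans$ and $\RtrustTrans$ are already transitively closed.

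Each of the four semantic constraints in Definition~\ref{semantic} is then discharged by the fact that a corresponding closure rule did not fire. The agent consistency condition $\mathit{COND}_1$ follows from the non-firing of $\crule_C$; the analogous reasoning condition $\mathit{COND}_2$ from non-firing of $\crule'_C$; the irreflexivity and antisymmetry of $\trust_p$ and $\prec$ (i.e.\ $\mathit{COND}_3$ and $\mathit{COND}_4$) follow from non-firing of $\crule_T$ and $\crule'_T$ on the already transitively closed relations; and the consistency of $\hfunc^\Interp$ follows from non-firing of $\crule_P$. Exhaustion of $\trule_1$, $\trule_2$, $\trule'_1$ and of $(\rightarrow)$ and $(\rightarrow')$ ensures that each piece of evidence and each derived reasoning has already produced its corresponding interpretation-layer witness, so that the candidate model $\mathfrak{M}$ simultaneously satisfies formulas of all three layers.

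The main obstacle is justifying the soundness of the elimination rules $\disc_2$, $\disc'_2$, and especially $\disc''_2$ with respect to satisfiability. These rules delete formulas on the basis of trust, and in order to complete the contrapositive I must argue that the removal of a less trusted assertion turns any model of the post-elimination theory into a model of the pre-elimination theory. This amounts to reading the semantics of the trust relations in Definition~\ref{semantic} as encoding precisely the overriding behaviour that the elimination rules implement, and then handling the cascading removal performed by $\disc''_2$ by induction on the depth of the derived evidence that transitively uses the eliminated reasoning.
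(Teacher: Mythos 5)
The paper itself does not actually give a proof of this lemma (it declares the proofs of Lemma~\ref{verybasiclemma} and Lemma~\ref{basiclemma} ``straightforward'' and omits them), so there is nothing to compare line by line. What can be said is that your overall strategy --- prove the contrapositive and extract a canonical model from a non-empty exhausted theory, discharging $\mathit{COND}_1$--$\mathit{COND}_4$ by the non-firing of the corresponding closure rules --- is exactly the ``open theories have models'' half of the soundness plan the authors announce at the start of their appendix, so the route is the intended one and the closure-rule bookkeeping you describe is fine.

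The genuine gap is the step you yourself flag as ``the main obstacle,'' and your sketch does not close it. To witness satisfiability of $\Theory$ you must exhibit a model of $\Theory$, but the elimination rules $\disc_2$, $\disc'_2$ and $\disc''_2$ delete formulas of $\Theory$, so the canonical model read off $\widehat{\Theory}$ simply does not satisfy the deleted assertions. Your proposed repair is to read the trust relations of Definition~\ref{semantic} as ``encoding precisely the overriding behaviour that the elimination rules implement,'' but Definition~\ref{semantic} contains no such clause: ${\trust_p}^\Interp$ and ${\prec}^\Interp$ are bare partial orders, and satisfaction of $\ddoc{a}{t}{\phi}$ is just ${a}^\Interp(t,\phi)=\mathit{True}$, with no exemption for less trusted sources. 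What would actually close the gap is an extension argument: show that the canonical model of $\widehat{\Theory}$ can be enlarged to make the deleted assertions true as well --- plausible for $\disc_2$ because $\mathit{COND}_1$ only constrains a single agent, so two agents may consistently disagree at the same instant, but this must be checked against the deleted agent's remaining assertions and carried through the cascading deletions of $\disc''_2$. As written, the proposal states what must be proved about the elimination rules rather than proving it. A secondary flaw: the appeal to Lemma~\ref{verybasiclemma} is circular here, since that lemma assumes $\Theory$ is satisfiable --- precisely what the contrapositive is trying to establish --- and its conclusion (consequences of $\widehat{\Theory}$ are consequences of $\Theory$) points in the wrong direction for transferring a model of $\widehat{\Theory}$ back to $\Theory$.
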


\begin{lemma}\label{lemma:open}
Given a satisfiable theory $\Theory$,
the rewriting system $\Hist$ rewrites the theory in an open and exhausted one.
\end{lemma}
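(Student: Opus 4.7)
The plan is to maintain a semantic invariant along the rewriting: fix a model $\mathfrak{M}$ of the satisfiable theory $\Theory$, and show by induction on the number of rewriting steps that every intermediate theory $\Theory_i$ produced \emph{without} firing a closure rule is still satisfied by $\mathfrak{M}$. From this invariant, openness is immediate: every closure rule infers $\bot$ from a conjunction of premises that would falsify $\mathfrak{M}$. Specifically, $\crule_C$ and $\crule'_C$ directly contradict $\mathit{COND}_1$ and $\mathit{COND}_2$, $\crule_T$ and $\crule'_T$ contradict $\mathit{COND}_3$ and $\mathit{COND}_4$, and $\crule_P$ contradicts the functionality of $\hfunc^\Interp$. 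So if $\mathfrak{M}$ models every $\Theory_i$, no closure rule ever applies, and the rewriting terminates in a theory that is open. Note that Lemma~\ref{verybasiclemma} already gives the semantic preservation we need along such model-preserving runs; what remains is to verify that every rewriting step is such a step.

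The inductive step amounts to a case analysis on the rewriting rules. The layer-lifting insertion rules $\trule_1$, $\trule'_1$, $\trule_2$, $\secrule$ and $(\rightarrow')$ are immediate from the semantics in Definition~\ref{semantic}, since the agent, reasoning and interpretation functions agree on the content of the evidence they lift. The transitivity rules $\trustTrans$ and $\RtrustTrans$ are sound by construction, as $\po^\Interp$ and $\mathcal{TR}^\Interp$ are defined as transitive closures. The elimination rules $\disc_2$, $\disc'_2$ and $\disc''_2$ only remove formulas, so $\mathfrak{M}$ still models the smaller theory trivially. The delicate case, and the one I expect to be the main obstacle, is the discordance-to-factual family $\disc_1$, $\disc'_1$, $\disc''_1$: from $\ddoc{a_1}{t_1}{\phi}$ and $\ddoc{a_2}{t_2}{\phi}$ with $t_1 \neq t_2$, they assert $\ddoc{a_1}{t_2}{\neg \phi}$ and $\ddoc{a_2}{t_1}{\neg \phi}$ (and analogously on the reasoning layer). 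This is precisely where $\mathit{COND}_1$ and $\mathit{COND}_2$ do real work: if $a_1^\Interp(t_1,\phi)=\mathit{True}$, then $\mathit{COND}_1$ forces $a_1^\Interp(t_2,\phi)=\mathit{False}$, hence $a_1^\Interp(t_2,\neg\phi)=\mathit{True}$, so the enlarged theory is still satisfied. The same argument on the reasoning layer uses $\mathit{COND}_2$. Every other verification is routine bookkeeping.

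Exhaustion follows from a separate finiteness argument. Because $\Agt$, $\Temp$, $\Vars$ and $\mathcal{R}$ are all finite, the set of well-formed formulas of the three layers that can ever appear in the rewriting is finite. Each insertion rule produces a formula in this finite universe, and each elimination rule strictly removes one; moreover, the acyclicity constraint on derived evidence prevents the $\trule_2$--$(\rightarrow')$ cascade from generating an unbounded chain. A standard well-foundedness argument on, say, the pair consisting of the set of formulas already present and the multiset of insertion-eligible premises then shows that the process reaches a fixpoint in finitely many steps, yielding an exhausted theory. Combined with openness, this gives the claim.
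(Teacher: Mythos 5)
Your proof is essentially correct, but it runs in the opposite direction from the paper's. The paper argues by contradiction: it assumes the satisfiable $\Theory$ is rewritten into a closed $\widehat{\Theory}$, does a case analysis on which of the five closure rules fired, and shows (in detail only for $\crule_C$ via $\mathit{COND}_1$) that the premises of that closure rule force $\Theory$ to be unsatisfiable. You instead fix a model $\mathfrak{M}$ of $\Theory$ and prove the contrapositive directly, by induction on rewriting steps, that every non-closure rule preserves satisfaction by $\mathfrak{M}$, so no closure rule can ever fire. The two arguments hinge on the same semantic conditions $\mathit{COND}_1$--$\mathit{COND}_4$, but yours makes explicit something the paper leaves implicit: that the premises of a closure rule appearing in an \emph{intermediate} theory really do reflect the original $\Theory$ (the paper delegates this to Lemma~\ref{verybasiclemma}, whose proof is omitted; your case analysis of $\trule_1$, $\trule_2$, $\secrule$, $(\rightarrow')$, the transitivity rules, and especially the $\disc_1$/$\mathit{COND}_1$ interaction effectively supplies that missing content). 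You also attempt to prove the ``exhausted'' half of the claim, which the paper's proof does not address at all. That part of your argument is the weakest: finiteness of the formula universe alone does not rule out an insert/delete loop (e.g., $\disc_1$ re-introducing a formula that $\disc_2$ has just removed, since $\disc_1$'s premises survive the deletion), and your appeal to a well-founded measure on ``the set of formulas present and the multiset of insertion-eligible premises'' is not yet a proof; in the paper this issue is only resolved by the fixed rule order of Algorithm~\ref{algo2:order} and Theorem~\ref{theorem:correct}. So: a sound and arguably more rigorous treatment of openness by a different route, plus an honest but incomplete sketch of termination that the paper simply omits.
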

\begin{proof}
This lemma is proved by contradiction. Assume that $\Theory$ is non empty and satisfiable, and is rewritten by $\Hist$ into an exhausted closed theory $\widehat{\Theory}$.
Starting from a satisfiable theory $\Theory$ there are five cases of
rewriting it in a contradictory theory that gives as result $\bot$. 
In the definition of model in Section~\ref{sec:semantics}, we introduced four conditions that constrain the behavior of interpretations. Below we provide the complete analysis only for the first case (that is provided as a consequence of $\mathit{COND}_1$). The other cases are a natural extension of this one and are omitted for the sake of space.
The first case occurs when applying the $\crule_C$ rule.
We have that:
$a^\Interp (t_1, p) = \mathit{True}$, and 
$a^\Interp (t_2, p) = \mathit{True}$.
$\mathit{COND}_1$ implies that a propositional variable referred to an agent $a$ can be true in only one instant of time, thus, $\Theory$ is not satisfiable.
 \hfill $\blacksquare$
\end{proof}
 
Lemma~\ref{lemma:open} introduces a result for $\Hist$-soundness as it guarantees that irregardless of the order in which we apply the rules, we catch a contradiction at a given point. Thus, as a direct consequence of the grounded semantics, of Lemma~\ref{basiclemma} and of
Lemma~\ref{lemma:open} we obtain Theorem~\ref{theosoundness}.

When a sound rewriting system exists in a logical reasoning system
we always have a method to deliver satisfiability check of a theory. In this case, we apply the rules to a theory until we reach a fixpoint. 
If we aim at developing an effective method, however, we need to provide a proof of termination for such a method.
For the rewriting system of $\Hist$ we can prove 
that 
a simple approach, based on the execution of the rules in a given order, is sufficient to provide an effective method for satisfiability checking. This is the result of correctness of Algorithm~\ref{algo2:order}. Firstly, in Lemma~\ref{lemma:contradiction}, we prove that the existence of $\bot$ in a theory, if not introduced by default, is the consequence of the application of the rules in a specific order.

\begin{lemma}\label{lemma:contradiction}
If a satisfiable $\Hist$-theory $\Theory$ 
is rewritten into a contradictory $\widehat{\Theory}$, then we have:
\vspace{-5pt}
\begin{description}
\item[\ \textit{1.}] rewritten the theory by using $\trule_1$ before $\disc_1$ and $\disc_2$, or
\item[\ \textit{2.}] rewritten the theory by using $\trule'_1$ before $\disc'_1$, $\disc''_1$, $\disc'_2$ and $\disc''_2$, or
\item[\ \textit{3.}] rewritten the theory by using $\secrule$ before $\disc_1$ and $\disc_2$, or
\item[\ \textit{4.}] rewritten the theory by using $(\rightarrow')$ before $\disc'_1$ and $\disc'_2$, or
\item[\ \textit{5.}] applied ($\trustTrans$) after $\disc_1$ and $\disc_2$, and
\item[\ \textit{6.}] applied $(\RtrustTrans)$ after $\disc'_1$, $\disc''_1$, $\disc'_2$ and $\disc''_2$.
\end{description}
\end{lemma}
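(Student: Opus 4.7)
The plan is to prove the contrapositive: assuming $\Theory$ is satisfiable and none of the six orderings 1--6 is used in the rewriting, I show $\widehat{\Theory}$ contains no $\bot$. Negating conditions~1--6 pins the admissible schedules to (essentially) the order followed by Algorithm~\ref{algo2:order}: first $\trustTrans$ and $\RtrustTrans$ compute the full trust closures, then all discordance rules at each layer are exhausted, and only afterwards do the transformation rules $\trule_1, \trule'_1, \secrule, (\rightarrow')$ propagate information to higher layers. I then split into cases on which closure rule generates $\bot$ in $\widehat{\Theory}$ and show each contradicts satisfiability of $\Theory$.

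The cases of $\crule_T$ and $\crule'_T$ are essentially immediate. A triggering $a \trust_p a$ is either present in $\Theory$, contradicting $\mathit{COND}_3$ in any candidate model, or derived by $\trustTrans$ from a cycle in the underlying trust relation; since by Definition~\ref{semantic} the interpreted relation $\trust_p^\Interp$ already coincides with its transitive closure, the cycle makes it non-irreflexive, again contradicting $\mathit{COND}_3$. The argument for $\crule'_T$ is identical with $\mathit{COND}_4$. Hence these closure rules cannot fire on a satisfiable $\Theory$ regardless of ordering.

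For $\crule_C$, a triggering pair $\ddoc{a}{t_1}{\phi}, \ddoc{a}{t_2}{\phi}$ with $t_1\neq t_2$ is either already in $\Theory$---immediately violating $\mathit{COND}_1$ in any model---or else at least one piece was inserted by $\trule_2$, in which case it was already present as a premise of a derived evidence of $\Theory$, and the same semantic constraint still forces unsatisfiability. For $\crule'_C$ one argues analogously at the reasoning layer using $\mathit{COND}_2$, but additionally needs the fact that, by the negation of conditions~2, 4 and~6, the full $\prec$ closure has been computed and $\secrule$, $(\rightarrow')$ only build reasoning-layer formulas from premises that already survived $\disc'_1, \disc''_1, \disc'_2, \disc''_2$.

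The main case, and the main obstacle, is $\crule_P$: the triggering pair $t:\phi, t:\neg\phi$ lives in the interpretation layer and must have been introduced by $\trule_1$ (from surviving simple evidence) or $\trule'_1$ (from surviving reasoning-layer formulas). Since conditions~1 and~2 do not hold, both transformation rules were applied only after the relevant $\disc$ rules had finished acting on the full $\trust$/$\prec$ closures. The delicate part is showing that the cascade built into $\disc''_2$ via the side condition $[\star]$---which removes every second-type derived evidence whose reasoning transitively uses a demoted rule---is sufficient to guarantee that no reasoning-layer formula survives into the interpretation layer on both sides of a $\crule_P$ clash. Once that bookkeeping is in place, any surviving $t:\phi, t:\neg\phi$ pair must reflect a factual discordance at the evidence or reasoning layer that has no governing trust relation, and this in turn forces a model-level clash in $\hfunc^\Interp$ already in $\Theory$; so $\Theory$ is unsatisfiable, contradicting the hypothesis and completing the contrapositive.
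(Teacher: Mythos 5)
Your proposal is correct and follows essentially the same route as the paper's proof: both reduce to a case analysis on which closure rule produces $\bot$, dismiss $\crule_T$, $\crule'_T$, $\crule_C$ and $\crule'_C$ as impossible on a satisfiable theory, and then tie a firing of $\crule_P$ to exactly the six rule orderings listed in the statement. Your contrapositive phrasing and the explicit appeal to $\mathit{COND}_1$--$\mathit{COND}_4$ are only cosmetic differences, and the step you flag as delicate (the bookkeeping for the $\disc''_2$ cascade) is likewise asserted rather than carried out in the paper's own argument.
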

\begin{proof}
There are two cases when $\widehat{\Theory}$ is empty: either
(1) the theory $\Theory$ is empty or (2) a closure rule was used.
The first case is not possible by definition of the theory, as we assume that $\Theory$ is not empty. The second case occurs if at least one of the five closure rules applied. 
Suppose by contradiction that rule $\crule_T$ or $\crule'_T$ is used to compute the contradictory $\widehat{\Theory}$. The application of this rule leads to a contradiction as $\Theory$ is satisfiable, 
whilst $\crule_T$ or $\crule'_T$ are applied when there is a contradiction in the theory. The same applies for rules $\crule_C$ and $\crule'_C$.

 Suppose, ad absurdum, that $\crule_P$ leads to a contradictory $\widehat{\Theory}$. The premises of $\crule_P$ are obtained using rules $\trule_1$, $\trule'_1$,
 $\trule_2$, $\secrule$ and $(\rightarrow')$.
 The first case for having a contradiction captured by $\crule_P$
 is when $\trule_1$ is applied before  $\disc_1$ and $\disc_2$.
 This happens because a contradiction is found that in fact
 was solved by  $\disc_1$ and $\disc_2$, as $\Theory$ is satisfiable.
 The second case for having a contradiction captured by $\crule_P$
 is when $\trule'_1$ is applied before  $\disc'_1$, $\disc''_1$, $\disc'_2$ and $\disc''_2$.
 This happens because a contradiction is found that in fact
 was solved by  $\disc'_1$, $\disc''_1$, $\disc'_2$ and $\disc''_2$, as $\Theory$ is satisfiable.
 The third case for having a contradiction captured by $\crule_P$
 is when $\secrule$ is applied before  $\disc_1$ and $\disc_2$.
 This happens because the formulas that were introduced produce the contradictions that in fact were solved by  $\disc_1$ and $\disc_2$, as $\Theory$ is satisfiable.
 The fourth case is similar to the third and occurs when $(\rightarrow')$ is applied before $\disc'_1$ and $\disc'_2$.
 The fifth case for having a contradiction captured by $\crule_P$
 is when $\trustTrans$ is applied after $\disc_1$ and $\disc_2$.
 This happens because the contradictions found could be solved by $\disc_1$ and $\disc_2$ if the $\trustTrans$ rule was applied before,
 as $\Theory$ is satisfiable.
 The sixth case for having a contradiction captured by $\crule_P$
 is when $\RtrustTrans$ is applied after $\disc'_1$, $\disc''_1$, $\disc'_2$ and $\disc''_2$.
 This happens because the contradictions found could be solved by $\disc'_1$, $\disc''_1$, $\disc'_2$ and $\disc''_2$ if the $\RtrustTrans$ rule was applied before, as $\Theory$ is satisfiable.
 \hfill $\blacksquare$
\end{proof} 

We are now able to prove that the rewriting procedure
introduced in Section~\ref{sec:rewriting-procedure} establishes 
satisfiability as defined
in Definition~\ref{semantic}.
We prove that the provided \emph{specific} order of application of the rewriting rules determines the existence of a model.
Given Theorem~\ref{theosoundness}, we prove that the rewriting given by Algorithm~\ref{algo2:order} is exhausted.
Theorem~\ref{theorem:correct} follows by applying Lemma~\ref{lemma:open} and Lemma~\ref{lemma:contradiction}.

\begin{proof}[Theorem~\ref{theorem:correct}]
Based on the semantics introduced in Section~\ref{sec:semantics} and given that 
Algorithm~\ref{algo2:order}
 applies the rules in 
the order specified in Lemma~\ref{lemma:contradiction}, we show that every theory that is unsatisfiable is rewritten by 
Algorithm~\ref{algo2:order}
 in a closed one, and consequently, every open theory resulting by the rewriting procedure, is also exhausted. We prove this by induction on the theory construction. 

The base cases occur for a relational formula, a simple evidence,
or a derived one. For lack of space, we omit the proofs as they follow quite straightforwardly by the definitions of relational formula, simple evidence, derived evidence, and the $\trust$ and $\prec$ relations. 

For the inductive step, we assume that $\Theory$ is formed by either $n$ relational formulas, or a blend of $n$ formulas,
and that we know that the claim
is true for $n-1$ formulas, and we show that the claim then holds also for $n$ formulas.  

Assume that $\Theory$ is formed by $n$ different relational formulas. 
The only rules that can be applied are $\trustTrans$ and $\RtrustTrans$, and the algorithm applies them. 
If $\Theory$ is unsatisfiable, then $\widehat{\Theory}$ is empty as rule $\crule_T$ or rule $\crule'_T$ capture any existing contradictions between
relational formulas. If $\Theory$ is satisfiable, then
$\widehat{\Theory}$ is open and exhausted as the algorithm has applied all possible rules. 

Assume that $\Theory$ is formed by $n$ different simple pieces of evidence. 
The algorithm first tries to apply $\disc_1$. 
If $\Theory$ is unsatisfiable and there are discordances, then 
$\widehat{\Theory}$ is empty, because the algorithm applies $\crule_C$.
If there are no discordances in $\Theory$, then the algorithm translates all the rules into second layer formulas,
by applying rule $\trule_1$. Since $\Theory$ is unsatisfiable, the algorithm applies the closing rule $\crule_P$
to capture the discordances between second layer formulas, and $\widehat{\Theory}$ is empty. 
If $\Theory$ is satisfiable, then the algorithm applies $\trule_1$, and $\widehat{\Theory}$
is open and exhausted as the algorithm has applied all possible rules. 

Assume that $\Theory$ is formed of $n$ different derived pieces of evidence. 
The algorithm tries to apply the rules in the following order:
$\trule_2$, $\disc_1$, $\trule_1$, $(\rightarrow)$, $\disc'_1$, $(\rightarrow')$, $\disc''_1$ and $\trule'_1$.
If $\Theory$ is unsatisfiable, then the algorithm applies one of the closing rules $\crule_C$, $\crule'_C$ and $\crule_P$
to capture the discordances between formulas of the different layers, and $\widehat{\Theory}$ is empty.
If $\Theory$ is satisfiable, then the algorithm yields a $\widehat{\Theory}$
that is open and exhausted as it has applied all possible rules. 

Assume that $\Theory$ is formed of $n$ different formulas (pieces of evidence and relational formulas).
The algorithm tries to apply all of its rules. 
If $\Theory$ is unsatisfiable, then the algorithm applies one of the closing rules
to capture the discordances between formulas of the different layers, and $\widehat{\Theory}$ is empty.
We know that our algorithm is able to capture all the contradictions, because 
the algorithm first applies all the rules that can surface all the possible contradictions and 
then it applies the appropriate closing rule.
If $\Theory$ is satisfiable, then the algorithm yields a $\widehat{\Theory}$
that is open and exhausted as it has applied all possible rules. 
\hfill $\blacksquare$
\end{proof}

\end{document}